\newcommand{\nt}{{NodeTrix}\xspace}
\DeclareMathOperator{\skel}{skel}
\newcommand{\remove}[1]{}
\renewcommand{\qed}{\hfill $\square$}
\begin{document}
	\title{NodeTrix Planarity Testing with Small Clusters}
	
	
	\author{
		Emilio Di Giacomo\inst{1},
		Giuseppe Liotta\inst{1},\\
		Maurizio Patrignani\inst{2},
		Ignaz Rutter\inst{3},
		Alessandra Tappini\inst{1}
	}
	
	\date{}
	
	\institute{
		Universit\`a degli Studi di Perugia, Italy\\
		\email{\{emilio.digiacomo,giuseppe.liotta\}@unipg.it}
		\email{alessandra.tappini@studenti.unipg.it} \and
		Roma Tre University, Italy\\
		\email{patrigna@dia.uniroma3.it}\and
		University of Passau, Germany\\
		\email{rutter@fim.uni-passau.de}
	}
	
	\maketitle

\begin{abstract}
We study the \nt planarity testing problem for flat clustered graphs when the maximum size of each cluster is bounded by a constant $k$. We consider both the case when the sides of the matrices to which the edges are incident are fixed and the case when they can be chosen arbitrarily.
We show that \nt planarity testing with fixed sides can be solved in $O(k^{3k+\frac{3}{2}} \cdot n)$ time for every flat clustered graph that can be reduced to a partial 2-tree by collapsing its clusters into single vertices. In the general case, \nt planarity testing with fixed sides can be solved in $O(n)$ time for $k = 2$, but it is NP-complete for any $k > 2$.  \nt planarity testing remains NP-complete also in the free sides model when $k > 4$.
\end{abstract}

\section{Introduction}

Motivated by the need of visually exploring non-planar graphs, hybrid planarity is one of the emerging topics in graph drawing (see, e.g., \cite{addfpr-ilrg-17,bbdlpp-valg-11,ddfp-cnrcg-jgaa-17,hfm-dhvsn-07}). A hybrid planar drawing of a non-planar graph suitably represents in restricted geometric regions those dense subgraphs for which a classical node-link representation paradigm would not be visually effective. These regions are connected by edges that do not cross each other. Different representation paradigms for the dense subgraphs give rise to different types of hybrid planar drawings.

Angelini \emph{et al.}~\cite{addfpr-ilrg-17} consider hybrid planar drawings where dense portions of the graph are represented as intersection graphs of sets of rectangles and study the complexity of testing whether a non-planar graph admits such a representation. In the context of social network analysis, Henry \emph{et al.}~\cite{hfm-dhvsn-07} introduce NodeTrix representations, where the dense subgraphs are represented as adjacency matrices (see Fig.~\ref{fi:nodetrix-demo} for a \nt representation drawn by the online prototype~\cite{giordano-demo}). Batagelj \emph{et al.}~\cite{bbdlpp-valg-11} study the question of minimizing the size of the matrices in a NodeTrix representation of a graph while guaranteeing the planarity of the edges that connect different matrices. While Batagelj \emph{et al.} can choose the subgraphs to be represented as matrices,
Da Lozzo \emph{et al.}~\cite{ddfp-cnrcg-jgaa-17} consider the problem of testing whether a flat clustered graph (i.e.\ a graph with clusters and no sub-clusters) admits a \nt planar representation. In the paper of Da Lozzo \emph{et al.} each cluster must be represented by a different adjacency matrix and the inter-cluster edges are represented as non-intersecting simple Jordan arcs. They prove that NodeTrix planarity testing for flat clustered graphs is NP-hard even in the constrained case where for each matrix it is specified which inter-cluster edges must be incident on the top, on the left, on the bottom, on the right side.

\begin{figure}[tb]
\centering
\includegraphics[width=0.60\textwidth]{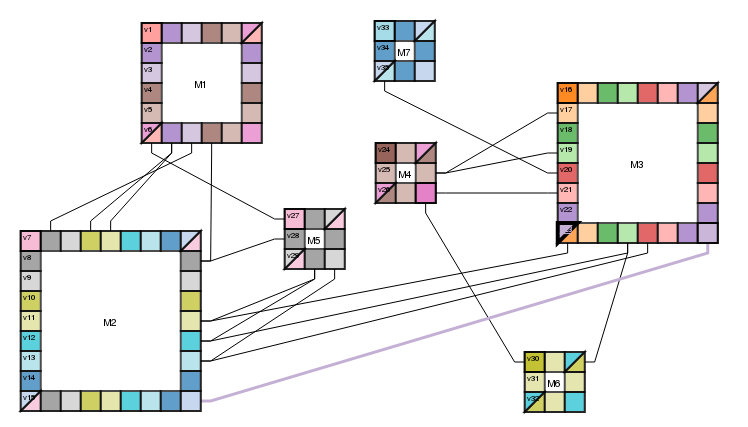}
\caption{A \nt representation with many crossings drawn with the online editor~\cite{giordano-demo} (courtesy of the authors of~\cite{ddfp-cnrcg-jgaa-17}).}\label{fi:nodetrix-demo}
\end{figure}

Motivated by these hardness results, in this paper we study whether NodeTrix planarity testing can be efficiently solved when the size of the clusters is not ``too big''. More precisely, we consider flat clustered graphs whose clusters have size bounded by a fixed parameter $k$ and we want to understand whether the \nt planarity testing problem is fixed parameter tractable, i.e. it can be solved in time $O(f(k)\,n^c)$, where $c$ is a constant and $f(k)$ is a computable function that depends only on $k$. Note that, in some contexts, $k$ is commonly used to denote the number of clusters in a clustered graph; we remark that in the following we denote by $k$ the size of clusters.
Our main results can be listed as follows:
\begin{itemize}

\item We describe an $O(k^{3k+\frac{3}{2}}\cdot n)$-time algorithm to test \nt planarity with fixed sides for every $n$-vertex flat clustered partial $2$-tree $G$. Informally, a flat clustered partial $2$-tree is a flat clustered graph such that by collapsing every cluster into a single vertex we obtain a partial 2-tree. We recall that partial $2$-trees include series-parallel graphs, which are a classical subject of investigation in graph theory and graph drawing, see, e.g.,~\cite{DiGiacomo2006,DBLP:journals/jacm/TakamizawaNS82,DBLP:journals/siamcomp/ValdesTL82,zn-odspgmb-08}.

\item When the flat clustered graph is not a partial 2-tree, \nt planarity testing with fixed sides can still be solved in $O(n)$ time for $k = 2$, but it becomes NP-complete for any larger value of $k$.

\item Finally, we extend the above hardness result to the free sides model and we show that \nt planarity testing remains NP-complete when the maximum cluster dimension is larger than four.

\end{itemize}

From a technical point of view, our linear-time algorithms solve special types of planarity testing problems, where the order of the edges around the vertices is suitably constrained to take into account the fact that each vertex of a matrix $M$ has four copies along the four sides of $M$. It may be worth recalling that Gutwenger {\em et al.}~\cite{gkm-ptoei-08} considered a similar problem. Namely, they studied planarity testing of non-clustered graphs with the additional constraint that the order of the edges around the vertices may not be arbitrarily permuted. The solution by Gutwenger {\em et al.}~\cite{gkm-ptoei-08} is based on modeling the embedding constraints as suitable gadgets of polynomial size that are added to the input graph so to form an enriched graph. The graph has a constrained planar embedding if and only if the enriched graph is planar. For NodeTrix planarity testing and $k=2$, we extend the approach of Gutwenger {\em et al.}~\cite{gkm-ptoei-08} by adding a new gadget, and show that a flat clustered graph is NodeTrix planar if and only if the enriched graph is planar. The gadget that we introduce models an ``embedding synchronization constraint'' between vertices of different triconnected components. Very informally, this constraint expresses the fact that in a NodeTrix planar embedding there are matrices that influence each other in their order of rows and columns and that an order for the rows of  a matrix implies a same order for its columns. By means of this gadget we can express the NodeTrix planarity testing problem for $k=2$ as a 2SAT problem. 

For matrices of size $k > 2$, however, it is unclear how to efficiently solve NodeTrix planarity testing by means of gadgets of polynomial size. This characteristic associates NodeTrix planarity testing for $k>2$ with other known variants of planarity testing, including clustered planarity, where the use of gadgets of polynomial size has been so far an elusive goal. In fact, our linear-time solution for $k >2$ and flat clustered graphs that are partial $2$-trees does not use a 2SAT formulation and it is based on an efficient visit of the block-cut-vertex decomposition tree of the graph.


The rest of the paper is organized as follows. Preliminary definitions are in Section~\ref{se:preliminaries}. Sections~\ref{se:overview} and \ref{se:partial-2-trees} describe a linear-time algorithm for clustered partial 2-trees with bounded cluster size. In Section~\ref{se:general-planar} we show that for general flat clustered graphs and fixed sides \nt planarity testing can be solved in linear time for $k=2$, but it is NP-complete for $k>2$. In Section~\ref{se:free-sides} we extend this completeness result to \nt planarity testing of flat clustered graphs with free sides. Finally, conclusions and open problems can be found in Section~\ref{se:open-problems}.

\section{Preliminaries}\label{se:preliminaries}

We assume familiarity with basic definitions of graph theory and graph drawing (see, e.g.,~\cite{dett-gd-99,Harary69a}). 

A {\em flat clustered graph} $G=(V,E,\mathcal{C})$ is a simple graph with vertex set $V$, edge set $E$, and a partition $\mathcal{C}$ of $V$ into sets $V_1,\dots,V_h$, called {\em clusters};  see Fig.~\ref{fi:example-notlight}. An edge $(u,v) \in E$ with $u \in V_i$ and $v \in V_j$ is an {\em intra-cluster edge} if $i=j$ and it is an {\em inter-cluster edge} if $i\neq j$.

A \emph{\nt representation} of a flat clustered graph $G$ is such that:
\textbf{(i)}~Each cluster $V_i$ with $|V_i|=1$ (called \emph{trivial} cluster) is represented as a distinct point in the plane.
\textbf{(ii)}~Each cluster $V_i$ with $|V_i|>1$ (called \emph{non-trivial} cluster) is represented by a symmetric adjacency matrix
$M_i$ (with $|V_i|$ rows and columns), where $M_i$ is drawn in the plane so that its boundary is a square with sides parallel to the coordinate axes.
\textbf{(iii)}~There is no intersection between two distinct matrices or between a point representing a vertex and a matrix.
\textbf{(iv)}~Each intra-cluster edge of a cluster $V_i$ is represented by the adjacency matrix $M_i$.
\textbf{(v)}~Each inter-cluster edge $(u,v)$ with $u \in V_i$ and $v \in V_j$ is represented by a simple Jordan arc connecting a point on the boundary of matrix $M_i$ with a point on the boundary of matrix $M_j$, where the point on $M_i$ (on $M_j$) belongs to the column or to the row of $M_i$ (resp.\ of $M_j$) associated with~$u$ (resp.\ with~$v$). We remark that we require the adjacency matrices to be symmetric, consistently with previous papers that studied the NodeTrix model~\cite{bbdlpp-valg-11,ddfp-cnrcg-jgaa-17,hfm-dhvsn-07}. 

\begin{figure}[tb]
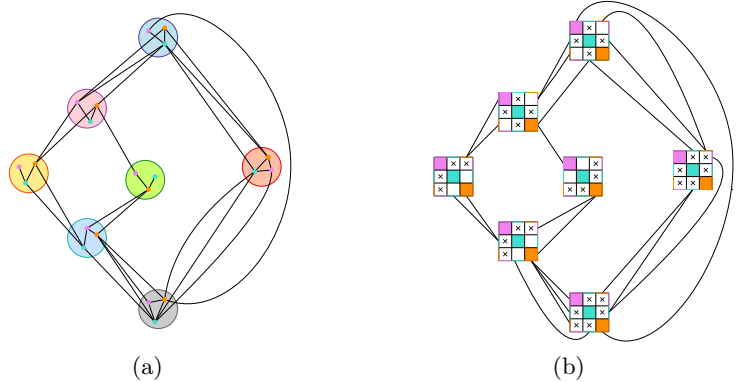

	\centering
	\subfigure[]{\includegraphics[width=0.36\textwidth,page=2]{sp-matrices}\label{fi:example-notlight}}
	\hfil
	\subfigure[]{\includegraphics[width=0.36\textwidth,page=7]{sp-matrices}\label{fi:example-nodetrix}}
	\caption{\subref{fi:example-notlight} A flat clustered graph $G$ that is not light. Clusters are represented with different colors. \subref{fi:example-nodetrix} A \nt planar representation of $G$.}\label{fi:clusteredgraph-nodetrix}
\end{figure}

A \nt representation of a flat clustered graph $G$ is \emph{planar} if there is no intersection between any two inter-cluster edges (except possibly at common end-points) nor an intersection between an inter-cluster edge and a matrix.  A flat clustered graph is \emph{\nt planar} if it admits a planar \nt representation. Fig.~\ref{fi:example-nodetrix} shows a \nt planar representation of the flat clustered graph of Fig.~\ref{fi:example-notlight}.

A formal definition of the problem investigated in the paper is as follows.
Let $G=(V,E,\mathcal{C})$ be a flat clustered graph with $n$ vertices and let $k$ be the maximum cardinality of a cluster in $\mathcal{C}$. Clustered graph $G$ is \emph{\nt planar with fixed sides} if it has a \nt planar representation where for each inter-cluster edge, the sides of matrices it attaches to is specified as part of the input; $G$ is \emph{\nt planar with free sides} if the sides of the matrices to which inter-cluster edges attach can be chosen arbitrarily.

Let $M_i$ be the matrix representing cluster $V_i$ in a \nt representation of $G$; let $v$ be a vertex of $V_i$ and let $(u,v)$ be an inter-cluster edge. Edge $(u,v)$ can intersect the boundary of $M_i$ in four points $p_{v,\textrm{\sc{t}}}, p_{v,\textrm{\sc{b}}}, p_{v,\textrm{\sc{l}}}$, and $p_{v,\textrm{\sc{r}}}$ since the row and column that represent $v$ in $M_i$ intersect the four sides of the boundary of $M_i$. We call these points the \emph{top copy}, \emph{bottom copy}, \emph{left copy}, and \emph{right copy} of $v$ in $M_i$, respectively.

A {\em side assignment for $V_i \in \mathcal{C}$} specifies for each inter-cluster edge whether the edge must attach to the matrix $M_i$ representing $V_i$ in its top, left, right, or bottom side. More precisely, a side assignment is a mapping $\phi_i$: $\bigcup_{j\neq i}{E_{i,j}} \rightarrow \{\textrm{\sc{t}}, \textrm{\sc{b}}, \textrm{\sc{l}}, \textrm{\sc{r}}\}$, where $E_{i,j}$ is the set of inter-cluster edges between the clusters $V_i$ and $V_j$ ($V_i$ and $V_j$ are \emph{adjacent} if $E_{i,j}\neq \emptyset$).
A {\em side assignment for $\mathcal{C}$} is a set $\Phi$ of side assignments for each $V_i \in \mathcal{C}$.

We denote as $G=(V,E,\mathcal{C},\Phi)$ a flat clustered graph $G=(V,E,\mathcal{C})$ with a given side assignment~$\Phi=\{\phi_1,\phi_2,\dots, \phi_{|\mathcal{C}|}\}$.
Let $\Gamma$ be a \nt representation of $G$ such that, for every inter-cluster edge $e=(u,v) \in E$ with $u \in V_i$ and $v \in V_j$, the incidence points of $e$ with the matrices $M_i$ and $M_j$ representing $V_i$ and $V_j$ in $\Gamma$ are exactly the points $p_{u,{\phi_i(e)}}$ and $p_{v,{\phi_j(e)}}$, respectively. We call $\Gamma$ a \nt representation of $G$ \emph{consistent with} $\Phi$.
We say that $G=(V,E,\mathcal{C},\Phi)$ is {\em \nt planar} if it admits a \nt planar representation consistent with~$\Phi$.

A flat clustered graph is \emph{light} if no inter-cluster edge has both its end-vertices belonging to non-trivial clusters and no trivial cluster has more than one inter-cluster edge incident to the same non-trivial cluster.
Note that a light flat clustered graph can contain an inter-cluster edge that has both its end-vertices belonging to trivial clusters.
A \emph{$1$-subdivision} of an inter-cluster edge $e=(u,v)$ of a flat clustered graph $G=(V,E,\mathcal{C})$ replaces $e$ by a path $u_0=u, u_1, u_2=v$ and defines a new flat clustered graph $G'=(V',E',\mathcal{C}')$, where $V' = V \cup \{u_1\}$, $E' = E \setminus e \cup \{(u_0,u_1),(u_1,u_2)\}$, and $\mathcal{C}'= \mathcal{C} \cup \{u_1\}$. The \emph{light reduction} of $G$ is the flat clustered graph $G'$ obtained by performing a $1$-subdivision of every inter-cluster edge of $G$. Fig.~\ref{fi:example-light} illustrates the light reduction $G'$ of the flat clustered graph $G$ in Fig.~\ref{fi:example-notlight}.

\begin{figure}[tb]
	\centering
	\subfigure[]{\includegraphics[width=0.36\textwidth,page=3]{sp-matrices}\label{fi:example-light}}
	\hfil
	\subfigure[]{\includegraphics[width=0.2\textwidth,page=5]{sp-matrices}\label{fi:example-frame}}
	\caption{\subref{fi:example-light} The light reduction $G'$ of the graph $G$ in Fig.~\ref{fi:example-notlight}. \subref{fi:example-frame} The frame $F$ of $G'$.}\label{fi:light}
\end{figure}

A consequence of Theorem 1 in~\cite{dlpt-pkng-17}
about the edge density of \nt planar graphs,
is that the light reduction $G'$ of a \nt planar flat clustered graph $G$ has $O(|V|)$ vertices and $O(|V|)$ inter-cluster edges.

\begin{property}\label{pr:lightness}
A flat clustered graph $G$ is \nt planar if and only if its light reduction $G'$ is \nt planar.
\end{property}

Based on Property~\ref{pr:lightness}, in the remainder we shall assume that flat clustered graphs are always light and we call them clustered graphs, for short.

The \emph{frame} of a clustered graph $G=(V,E,\mathcal{C})$ is the graph $F$ obtained by collapsing each cluster $V_i \in \mathcal{C}$, with $|V_i| > 1$, into a single vertex $c_i$ of $F$, called the \emph{representative vertex of $V_i$ in $F$}. Let $c_i$ and $c_j$ be the two representative vertices of $V_i$ and $V_j$ in $F$, respectively. For every inter-cluster edge connecting a vertex of $V_i$ to a vertex of $V_j$ in $G$ there is an edge in $F$ connecting $c_i$ and $c_j$. Observe that the frame graph $F$ of $G$ is in general a multigraph; however, $F$ is simple when $G$ is light. Fig.~\ref{fi:example-frame} shows the frame of the graph in Fig.~\ref{fi:example-light}.

Since the \nt planarity of a clustered graph implies the planarity of its frame graph, we will test \nt planarity only on those clustered graphs that have a planar frame.

A \emph{2-tree} is a graph recursively defined as follows: (i) an edge is a 2-tree; (ii) the graph obtained by adding a vertex $v$ to a 2-tree $G$ and by connecting $v$ to two adjacent vertices of $G$ is a 2-tree.
A (planar) graph is a \emph{partial 2-tree} if it is a subgraph of a (planar) 2-tree.
A biconnected partial 2-tree is a \emph{series-parallel graph}. A clustered graph is a \emph{partial 2-tree} if its frame is a partial 2-tree. We will sometimes talk about series-parallel clustered graphs when their frames are series parallel.

In our testing algorithm we will make use of block-cut-vertex trees and of SPQR-trees.  
We thus conclude this section by recalling their definitions. The \emph{block-cut-vertex tree} of a connected graph $G$ is a tree whose nodes are the \emph{blocks} (i.e. the biconnected components) and the cut-vertices of $G$. There exists an edge between a block $B$ and a cut-vertex $v$ if $v$ belongs to $B$.

\medskip
\noindent \textbf{SPQR-tree.}
Let $G$ be a simply biconnected graph. A \emph{separation pair} is a pair of vertices whose removal disconnects $G$. A \emph{split pair} is either a separation pair or a pair of adjacent vertices. 
A \emph{split component} of a split pair $\{u,v\}$ is either an edge $(u,v)$ or a maximal subgraph $G_{uv} \subset G$ such that $\{u,v\}$ is not a split pair of $G_{uv}$. 
Vertices $\{u,v\}$ are the \emph{poles} of $G_{uv}$.  A split pair $\{s',t'\}$ of $G$ is maximal with respect to a different split pair $\{s,t\}$ of $G$, if for every other split pair $\{s^*,t^*\}$ of $G$, there is a split component that includes the vertices $s',t',s,t$. 
The \emph{SPQR-tree} $T$ of $G$ with respect to an edge $e$ is a rooted tree that describes a recursive decomposition of $G$ induced by its split pairs~\cite{dt-olpt-96}. 
In what follows, we call \emph{nodes} the vertices of $T$, to distinguish them from the vertices of $G$. The nodes of $T$ are of four types S, P, Q, or R. 
Each node $\mu$ of $T$ has an associated biconnected multigraph called the \emph{skeleton of $\mu$} and denoted as $\skel(\mu)$, which contains a distinguished edge, called the \emph{reference edge}, between the two poles of the corresponding split component.  
At each step, given the current split component $G_\mu$, its split pair $\{s,t\}$, and a node $\nu$ in $T$, the node $\mu$ of the tree corresponding to $G_\mu$ is introduced and attached to its parent vertex $\nu$, while the decomposition possibly recurs on some split component of $G_\mu$.  Graph $G_\mu$ is called the \emph{pertinent graph} of $\mu$.
At the beginning of the decomposition the parent of $\mu$ is a Q-node corresponding to $e=(u,v)$, $G_\mu = G \setminus e$, and $\{s,t\} = \{u,v\}$.

\textbf{Base case}: $G_\mu$ consists of a single edge between $s$ and $t$. Then, $\mu$ is a Q-node whose skeleton is $G_\mu$ itself plus the reference edge between $s$ and $t$.  
 
\textbf{Parallel case}: The split pair $\{s,t\}$ has $G_1,\dots,G_k$ ($k \geq 2$) as split components.
Then, $\mu$ is a P-node whose skeleton is a set of $k+1$ parallel edges between $s$ and $t$, one for each split component $G_i$ plus the reference edge between $s$ and $t$.  
The decomposition recurs on $G_1,\dots,G_k$ with $\mu$ as parent node. 

\textbf{Series case}: $G_\mu$ is not biconnected and it has at least one cut-vertex (a vertex whose removal disconnects $G_\mu$).  
Then, $\mu$ is an S-node whose skeleton is defined as follows. 
Let $v_1,\dots,v_{k-1}$, where $k \geq 2$, be the cut vertices of $G_\mu$. 
The skeleton of $\mu$ is a path consisting of the edges $e_1,\dots,e_k$, where $e_i= (v_{i-1},v_i)$, $v_0=s$ and $v_k=t$, plus the reference edge between $s$ and $t$ which makes the path a cycle. 
The decomposition recurs on the split components corresponding to $e_1,\dots,e_k$ with $\mu$ as parent node. 

\textbf{Rigid case}: None of the other cases is applicable. 
Let $\{s_1,t_1\},\dots,\{s_k,t_k\}$ be the maximal split pairs of $G$ with respect to $\{s,t\}$ ($k \geq 1$) such that $\{s_i,t_i\}$ belongs to $G_\mu$, for $i=1,\dots,k$. 
Then $\mu$ is an R-node whose skeleton is the graph obtained from $G_\mu$ as follows. Connect each pair $\{s_i,t_i\}$ with an edge if not already adjacent, connect the poles $s$ and $t$, and finally remove all vertices other than the poles and the pairs $\{s_i,t_i\}$.
The decomposition recurs on each $G_i$ with $\mu$ as parent node. 

\section{\nt Representations and Wheel Reductions}\label{se:overview}

The linear-time algorithms described in Sections~\ref{se:partial-2-trees} and~\ref{se:general-planar} are based on decomposing the planar frame $F$ of a clustered graph $G=(V,E,\mathcal{C},\Phi)$ into its biconnected components and storing them into a block-cut-vertex tree. We process each block of $F$ by using an SPQR decomposition tree that is rooted at a reference edge and visited from the leaves to the root. For each visited node $\mu$ of the decomposition tree of a block of $F$, we test whether the subgraph of $G$ whose frame is the pertinent graph of $\mu$ satisfies the planar constraints imposed by the side assignment on the inter-cluster edges. A key ingredient to efficiently perform the test at $\mu$ is the notion of \emph{wheel replacement}.

Let $G=(V,E,\mathcal{C},\Phi)$ be a clustered graph with side assignment $\Phi$ and let $V_i \in \mathcal{C}$ be a cluster with $k>1$ vertices. $V_i$ admits $k!$ permutations of its vertices and we associate a suitable graph to each such permutation. Let $\pi_i = v_0, v_1, \dots, v_{k-1}$ be a permutation of the vertices of $V_i$.
The \emph{wheel of $V_i$ consistent with $\pi_i$} is the wheel graph consisting of a vertex $v$ of degree $4k$ adjacent to the vertices of an oriented cycle $v_{0,\textrm{\sc{t}}},$ $v_{1,\textrm{\sc{t}}},$ $\dots, v_{k-1,\textrm{\sc{t}}}$, $v_{0,\textrm{\sc{r}}}$, $v_{1,\textrm{\sc{r}}}, \dots, v_{k-1,\textrm{\sc{r}}}$, $v_{k-1,\textrm{\sc{b}}}$, $v_{k-2,\textrm{\sc{b}}}, \dots, v_{0,\textrm{\sc{b}}}$, $v_{k-1,\textrm{\sc{l}}}$, $v_{k-2,\textrm{\sc{l}}}$, $\dots, v_{0,\textrm{\sc{l}}}$ where each edge of the cycle is oriented forward. Intuitively, this oriented cycle will be embedded clockwise to encode the constraints induced by a matrix $M_i$ representing $V_i$ when its left-to-right order of columns is $\pi_i$.
More precisely, a \emph{wheel replacement of cluster $V_i$ consistent with $\pi_i$} is the clustered graph obtained as follows: (i) remove $V_i$ and all the inter-cluster edges incident to $V_i$; (ii) insert the wheel $W_i$ of $V_i$ consistent with $\pi_i$; and (iii) for each inter-cluster edge $e=(u,v_j)$, with $v_j \in V_i$, insert edge $(u,v_{j,\phi_i(e)})$ incident to $W_i$.
We call edge $(u,v_{j,\phi_i(e)})$ the \emph{image} of edge~$e=(u,v_j)$.

Let $G=(V,E,\mathcal{C},\Phi, \Pi)$ be a clustered graph with side assignment $\Phi$ where
$\Pi$ is a set of permutations $\{\pi_1, \pi_2, \dots, \pi_{|\mathcal{C}|}\}$, one for each cluster $V_i$ (with $i=1, \dots, |\mathcal{C}|$). We call $\Pi$ the \emph{permutation assignment} of $G$ and we say that $G$ is \emph{\nt planar with side assignment $\Phi$ and permutation assignment $\Pi$} if $G$ admits a \nt planar representation with side assignment $\Phi$ where for each matrix $M_i$ the permutation of its columns is $\pi_i$.
The \emph{wheel reduction of $G$ consistent with $\Pi$} is the graph obtained by performing a wheel replacement of $V_i \in \mathcal{C}$ consistent with $\pi_i$ for each $i=1, \dots, |\mathcal{C}|$.

\begin{theorem}\label{th:wheel-reduction}
Let $G=(V,E,\mathcal{C},\Phi, \Pi)$ be a clustered graph with side assignment $\Phi$ and permutation assignment $\Pi$. $G$ is \nt planar if and only if the planar wheel reduction of $G$ admits a planar embedding where the external oriented cycle of each wheel $W_i$ is embedded clockwise.
\end{theorem}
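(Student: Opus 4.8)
The plan is to prove both directions of the biconditional by exhibiting a local correspondence between a \nt planar representation consistent with $\Phi$ and $\Pi$ and a planar embedding of the wheel reduction in which every external cycle winds clockwise. The key observation driving the whole argument is that, in any \nt representation, each matrix $M_i$ occupies a topologically disk-like region of the plane whose boundary is traversed, clockwise, exactly in the cyclic order of copies prescribed by the wheel: the top copies $v_{0,\textrm{\sc{t}}},\dots,v_{k-1,\textrm{\sc{t}}}$ left-to-right along the top side, then the right copies top-to-bottom, then the bottom copies right-to-left, then the left copies bottom-to-top. Thus the boundary of $M_i$ is \emph{combinatorially} the external oriented cycle of $W_i$, and collapsing that disk to the hub vertex $v$ of the wheel is a planarity-preserving operation.

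First I would prove the forward direction. Given a \nt planar representation $\Gamma$ consistent with $\Phi$ and $\Pi$, I replace each matrix region $M_i$ by a small disk, place the hub $v$ at its center, route the $4k$ spokes to the $4k$ boundary copies without crossings (possible since the disk interior is free), and reroute each inter-cluster edge $e=(u,v_j)$ so that instead of attaching to $M_i$ at the point $p_{v_j,\phi_i(e)}$ it attaches to the corresponding copy $v_{j,\phi_i(e)}$ on the cycle. Because $\Gamma$ is consistent with $\Phi$, the attachment side is exactly $\phi_i(e)$, so the rerouted edge is precisely the image edge of the wheel replacement; because no two inter-cluster edges cross and none crosses a matrix in $\Gamma$, the rerouted edges together with the spokes form a planar drawing of the wheel reduction. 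Finally, since the matrix boundary is traversed clockwise in the order induced by $\pi_i$, the external cycle of $W_i$ inherits a clockwise embedding. This yields a planar embedding of the wheel reduction with all external cycles clockwise.

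For the converse, I start from a planar embedding $\mathcal{E}$ of the wheel reduction in which every external cycle $W_i$ is embedded clockwise, and reverse the construction. In $\mathcal{E}$ the hub $v$ of $W_i$ together with its spokes partitions the interior of the cycle, so contracting the closed disk bounded by the external cycle to a point does not affect planarity of the part outside; replacing this disk by a square matrix $M_i$ whose four sides host, in the clockwise boundary order forced by $\mathcal{E}$, exactly the copies $v_{j,\textrm{\sc{s}}}$ for $\textrm{\sc{s}}\in\{\textrm{\sc{t}},\textrm{\sc{b}},\textrm{\sc{l}},\textrm{\sc{r}}\}$ recovers a matrix whose column order is $\pi_i$. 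Each image edge $(u,v_{j,\phi_i(e)})$ then becomes an inter-cluster edge attaching to $M_i$ on side $\phi_i(e)$, so the resulting \nt representation is consistent with $\Phi$, has column permutations given by $\Pi$, and is planar because $\mathcal{E}$ was. The clockwise condition is exactly what guarantees that the four sides appear in the standard top--right--bottom--left cyclic order, so that the symmetric matrix $M_i$ can be drawn with its row order matching its column order, as required by the \nt model.

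The main obstacle is establishing rigorously that the clockwise-embedding constraint on the external cycle is both necessary and sufficient to realize the copies as a genuine axis-aligned matrix with consistent row/column order. One must argue that a \emph{counterclockwise} embedding would force a reflected boundary order that no planar matrix can realize (intuitively, it would correspond to flipping the matrix out of the plane), and, conversely, that every clockwise embedding admits an assignment of the four contiguous arcs of the cycle to the four sides of a square so that within each arc the copies appear in the monotone order dictated by $\pi_i$. This amounts to checking that the cyclic sequence of copies written in the wheel definition is, up to rotation, realizable as the clockwise boundary walk of an axis-aligned square, and that the symmetry constraint (row order equals column order, since $M_i$ is symmetric) is automatically satisfied once the top side carries $v_{0,\textrm{\sc{t}}},\dots,v_{k-1,\textrm{\sc{t}}}$ and the left side carries $v_{0,\textrm{\sc{l}}},\dots,v_{k-1,\textrm{\sc{l}}}$ in the matching order. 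The rest of the proof is a routine planarity-preservation argument about contracting and expanding disks.
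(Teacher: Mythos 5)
Your proposal is correct and takes essentially the same approach as the paper's proof: both directions are handled by a local replacement argument (each matrix is swapped for its wheel in the forward direction, and each wheel's hub face is replaced by a matrix in the converse), resting on the observation that the wheel's oriented external cycle is exactly the clockwise boundary walk of the matrix determined by $\pi_i$. The realizability technicality you flag at the end is treated no more rigorously in the paper, which simply morphs the cycle vertices onto the boundary attachment points of the matrix while preserving the cyclic edge orders.
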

\begin{proof}
	If $G$ is \nt planar, we construct a planar embedding of a wheel reduction of $G$ where the external oriented cycle of each wheel is embedded clockwise as follows.
	Let $\Gamma$ be a \nt planar representation of $G$. We replace each matrix $M_i$ representing a cluster $V_i \in \mathcal{C}$ by the wheel $W_i$ of $V_i$ consistent with the permutation $\pi_i \in \Pi$ of $V_i$. Also $W_i$ is embedded in such a way that a forward traversal of its external cycle is a clockwise traversal of the cycle. Every inter-cluster edge $e=(u,v_j)$, with $v_j \in V_i$, is incident to the vertex $v_{j,\phi_i(e)}$ of the wheel $W_i$. Also, for all $j= 0, 1, \dots, k-1$ and for all $\textrm{\sc{x}} \in \{ \textrm{\sc{t}}, \textrm{\sc{b}}, \textrm{\sc{l}}, \textrm{\sc{r}} \}$, the cyclic order of the inter-cluster edges incident to $v_{j,\textrm{\sc{x}}}$ in $W_i$ is the same as the cyclic order of the inter-cluster edges incident to $p_{v_j,\textrm{\sc{x}}}$ in $M_i$. It is immediate to see that, since no two inter-cluster edges cross in $\Gamma$, no two edges cross in the constructed embedding of the wheel reduction of $G$.
	
	Conversely, suppose that we are given a planar embedding of the wheel reduction of $G$ where the external oriented cycle of each wheel is embedded clockwise. We show how to construct a \nt planar representation of $G$. For each wheel $W_i$ we remove the center vertex of the wheel and insert a matrix $M_i$ inside the created face. We now morph every vertex $v_{j,\phi_i(e)}$ of the external cycle of $W_i$ to point $p_{v_j,\phi_i(e)}$ in $M_i$ and maintain around $p_{v_j,\phi_i(e)}$ the cyclic order of the inter-cluster edges incident to $v_{j,\phi_i(e)}$ in the planar embedding of the wheel reduction.
	\qed
\end{proof}

Fig.~\ref{fi:example-matrices} and Fig.~\ref{fi:example-wheels} show respectively a \nt planar representation of the flat clustered graph in Fig.~\ref{fi:example-light} and the corresponding wheel reduction with its planar embedding.

\begin{figure}[tb]
	\centering
	\subfigure[]{\includegraphics[width=0.45\textwidth,page=8]{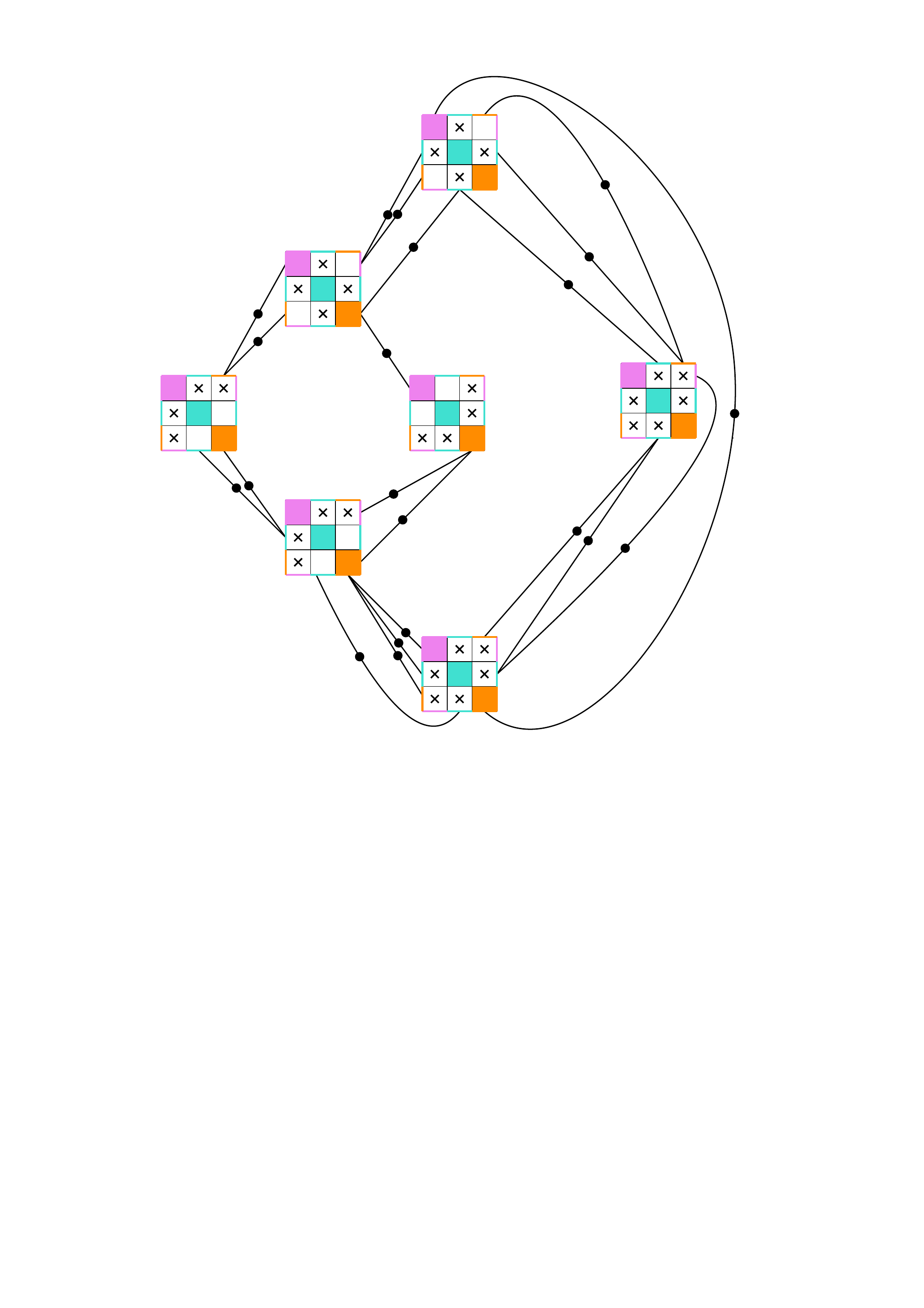}\label{fi:example-matrices}}
	\hfil
	\subfigure[]{\includegraphics[width=0.45\textwidth,page=9]{sp-matrices.pdf}\label{fi:example-wheels}}
	\caption{\subref{fi:example-matrices} A \nt planar representation of the  graph in Fig.~\ref{fi:example-light}. \subref{fi:example-wheels} The planar embedding of the corresponding wheel reduction.}
\end{figure}

Based on Theorem~\ref{th:wheel-reduction}, we can test the graph $G=(V,E,\mathcal{C},\Phi)$ for \nt planarity by exploring the space of the possible permutation sets $\Pi$ and the corresponding wheel reductions in search of a \nt planar $G=(V,E,\mathcal{C},\Phi,\Pi)$.
Note that, if the maximum size of a cluster is given as a parameter $k$, every cluster $V_i$ can be replaced by $k!$ wheel graphs, one for each possible permutation of the vertices of $V_i$. In order to test planarity, for any such wheel replacement $W_i$, the cyclic order of the inter-cluster edges incident to the same vertex of $W_i$ can be arbitrarily permuted. While each wheel reduction yields an instance of constrained planarity testing that can be solved with the linear-time algorithm described in~\cite{gkm-ptoei-08}, a brute-force approach that repeats this algorithm on each possible wheel reduction may lead to testing planarity on $(k!)^{|\mathcal{C}|}$ different instances.
Instead, for each visited node $\mu$ of the decomposition tree $T$ we compute a succinct description of the possible \nt planar representations of the subgraph $G_\mu$ of $G$ represented by the subtree of $T$ rooted at $\mu$. This is done by storing for the poles of $\mu$ those pairs of wheel graphs that are compatible with a \nt planar representation of $G_\mu$.
How to efficiently compute such a succinct description will be the subject of the next sections.

\section{Testing \nt Planarity for Partial 2-Trees}\label{se:partial-2-trees}

In this section we prove that \nt planarity testing with fixed sides can be solved in linear time for a clustered graph $G=(V,E,\mathcal{C},\Phi)$ when the maximum size of any cluster of $\mathcal{C}$ is bounded by a constant and the frame graph is a partial 2-tree.
This contrasts with the NP-hardness of \nt planarity testing with fixed sides proved in~\cite{ddfp-cnrcg-jgaa-17} in the case where the size of the clusters is unbounded.

We first study the case of a clustered graph whose frame graph is a series-parallel graph, i.e., it is biconnected and its SPQR decomposition tree only has Q-, P-, and S-nodes. We refer to such trees as SPQ decomposition trees.
We then consider the case of partial 2-trees, i.e., graphs whose biconnected components are series-parallel.

%
%
\subsection{Series-Parallel Frame Graphs}\label{sse:series-parallel}

In this section we prove that \nt planarity testing with fixed sides can be solved in $O(k^{3k+\frac{3}{2}}\cdot n)$ time for clustered graphs whose frame graphs are series-parallel and have cluster size at most $k$.

Let $G=(V,E,\mathcal{C},\Phi)$ be a series-parallel clustered graph with side assignment $\Phi$ and let $F$ be its frame graph. Let $T$ be the SPQ decomposition tree of $F$ rooted at any Q-node; see Fig.~\ref{fi:sp-frame} and Fig.~\ref{fi:sp-decomposition-tree} for an example. To simplify the description and without loss of generality, we assume that every S-node of $T$ has exactly two children. Let $\mu$ be a node of $T$, and let $s_\mu$ and $t_\mu$ be the poles of $\mu$ (refer to Fig.~\ref{fi:frame-spq}).
Consider the pertinent graph $F_\mu$ represented by the subtree of $T$ rooted at $\mu$ and let $v_\mu$ be a pole of $\mu$ ($v_\mu \in \{s_\mu, t_\mu\}$).
Pole $v_\mu$ in the frame graph $F$ may correspond to a non-trivial cluster $V_i$ of $\mathcal{C}$. In this case, we call $v_\mu$ a \emph{non-trivial pole of $\mu$} and cluster $V_i$ the \emph{pertinent cluster} of $v_\mu$.
\begin{figure}[tb]
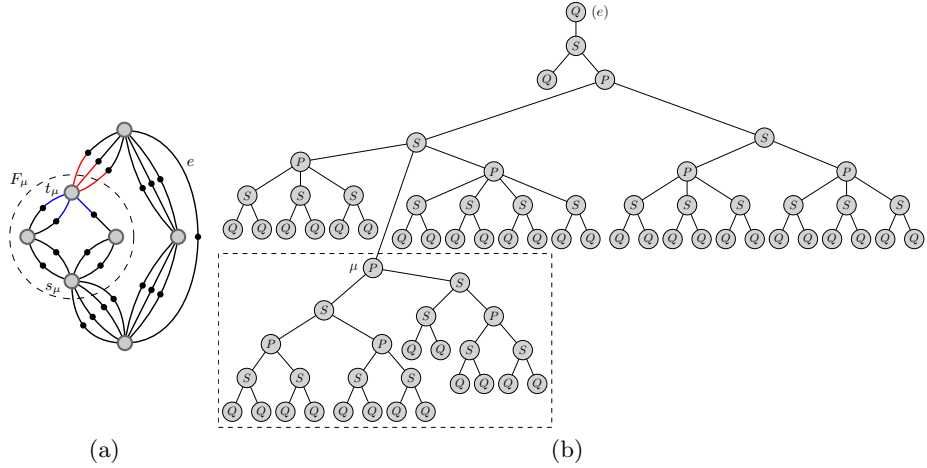

	\centering
	\subfigure[]{\includegraphics[width=0.22\textwidth,page=4]{sp-matrices.pdf}\label{fi:sp-frame}}
	\hfil
	\subfigure[]{\includegraphics[width=0.77\textwidth,page=6]{sp-matrices.pdf}\label{fi:sp-decomposition-tree}}
	\caption{\subref{fi:sp-frame} The frame graph $F$ of the graph $G$ in Fig.~\ref{fi:example-light}. The pertinent graph $F_\mu$ of a node $\mu$ and its poles $s_\mu$ and $t_\mu$ are highlighted. Intra- and extra-component edges are colored blue and red, respectively. \subref{fi:sp-decomposition-tree} The SPQ decomposition tree $T$ of $F$ rooted at edge $e$. The subtree of $T$ rooted at $\mu$ is highlighted.}\label{fi:frame-spq}
\end{figure}
The edges of $F_\mu$ incident to $v_\mu$ are the \emph{intra-component edges of $v_\mu$}. The other edges of $F$ incident to $v_\mu$ are the \emph{extra-component edges of $v_\mu$}.
The intra-component edges of $t_\mu$ are colored blue in Fig.~\ref{fi:sp-frame}, while the extra-component edges of $t_\mu$ are colored red.
Each intra-component (extra-component) edge of $v_\mu$ corresponds to an inter-cluster edge $e'$ of $G$ incident to one vertex of the pertinent cluster $V_\mu$ of $v_\mu$. We call $e'$ an \emph{intra-component edge (extra-component edge) of $V_\mu$}. We associate $k!$ wheel graphs to each non-trivial pole $v_\mu$ of $\mu$. Each of them is a wheel replacement of the pertinent cluster of $v_\mu$, consistent with one of the $k!$ permutations of its vertices.

Let $v_\mu$ be a non-trivial pole of $\mu$, let $V_\mu$ be the pertinent cluster of $v_\mu$, let $\pi_\mu$ be a permutation of the vertices of $V_\mu$, and let $W_\mu$ be the wheel replacement of $V_\mu$ consistent with $\pi_\mu$.
Every edge $e$ incident to $W_\mu$ such that $e$ is the image of an inter-cluster edge $e'$ of $G$ is labeled either \texttt{int} or \texttt{ext}, depending on whether $e'$ is an intra-component or an extra-component edge of $V_\mu$.
A vertex $w$ of the external cycle of $W_\mu$ is assigned one label of the set $\{\texttt{void}, \texttt{int},\texttt{ext}, \texttt{int-ext}\}$ as follows. Vertex $w$ is labeled \texttt{void} if no edge incident to $w$ is the image of an inter-cluster edge. Vertex $w$ is labeled \texttt{int} (resp.\ \texttt{ext}) if we have a label \texttt{int} (resp.\ \texttt{ext}) on every edge $e$ incident to $w$ such that $e$ is the image of an inter-cluster edge. Otherwise, vertex $w$ is labeled \texttt{int-ext}. See Fig.~\ref{fi:wheels} for an example concerning the wheel $W_{t_{\mu}}$ of Fig.~\ref{fi:p-wheels}; the dashed curve of Fig.~\ref{fi:p-wheels} shows the subgraph of the wheel reduction corresponding to the pertinent graph $F_{\mu}$ of Fig.~\ref{fi:sp-frame}.

\begin{figure}[tb]
	\centering
	\subfigure[]{\includegraphics[width=0.45\textwidth,page=1]{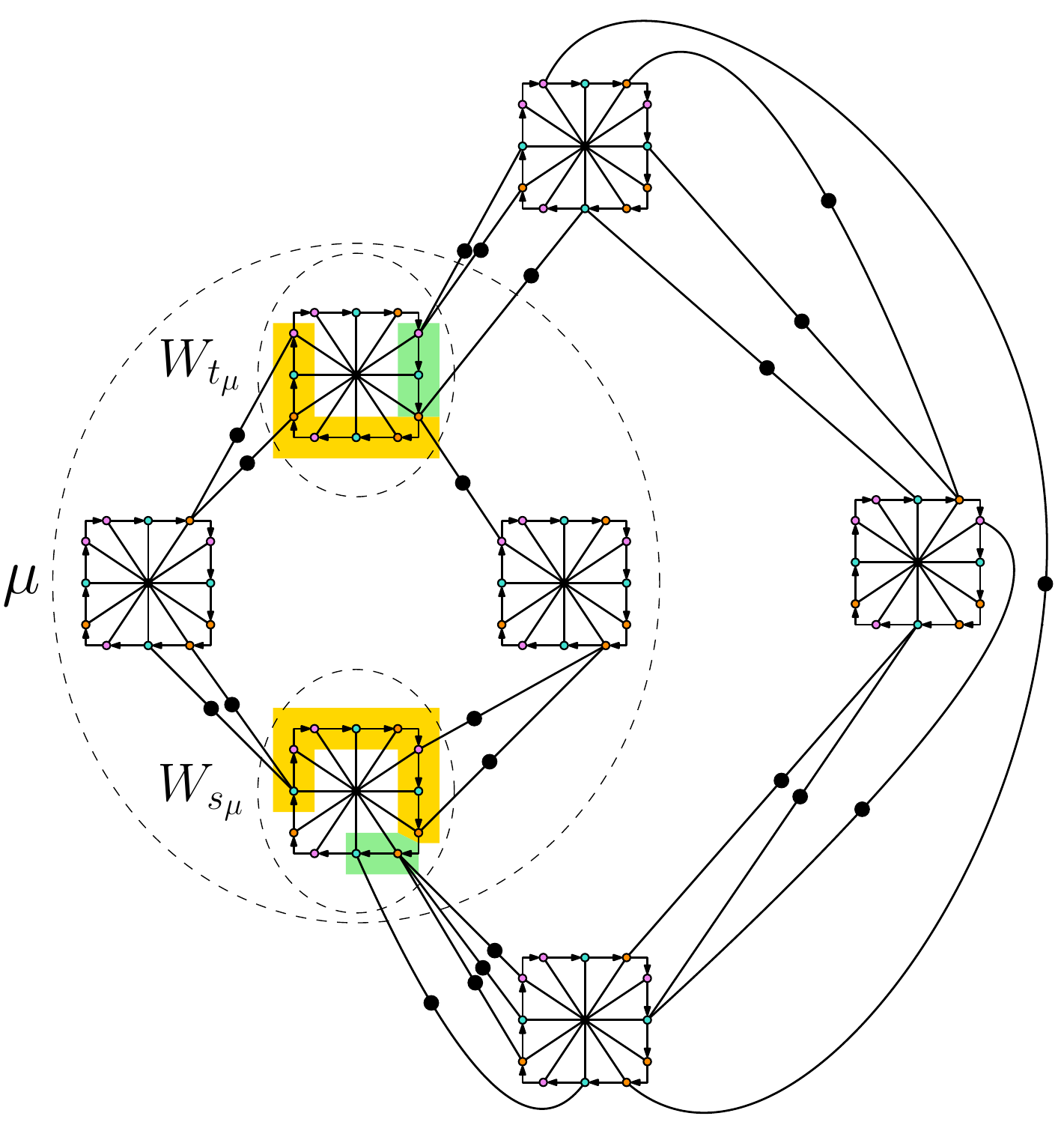}\label{fi:p-wheels}}
	\hfil
	\subfigure[]{\includegraphics[width=0.23\textwidth,page=2]{sp-wheels.pdf}\label{fi:wheels}}
	\caption{(a) The wheel reduction of graph $G'$ in Fig.~\ref{fi:example-light}; the complete internal and external sequences for a pair of poles are also highlighted. (b) Labeling of the vertices of $W_{t_{\mu}}$.}\label{fi:sp}
\end{figure}

A clockwise sequence $v_0, v_1, \dots, v_j$ of vertices of the external cycle of $W_\mu$ is an \emph{external sequence of pole $v_\mu$ consistent with $\pi_\mu$} if $v_0$ and $v_j$ are labeled either \texttt{ext} or \texttt{int-ext} and all the other vertices of the sequence are labeled either \texttt{void} or \texttt{ext}. An external clockwise sequence of pole $v_\mu$ is \emph{complete} if it contains all the vertices of $W_\mu$ that are labeled \texttt{ext} and \texttt{int-ext}. Note that a complete external sequence may contain many \texttt{void} vertices but no \texttt{int} vertex.
\emph{Internal} and \emph{complete internal} sequences of pole $v_\mu$ are defined analogously; see Fig.~\ref{fi:sp} for an illustration.
Observe that a complete internal sequence and a complete external sequence of $v_\mu$ may not exist when vertices labeled \texttt{int} and vertices labeled \texttt{ext} alternate more than twice when traversing clockwise the external cycle of $W_\mu$, or when three vertices are labeled \texttt{int-ext}.
A special case is when $W_\mu$ has exactly two vertices $w_1$ and $w_2$ labeled \texttt{int-ext} and all other vertices are \texttt{void}. In this case, the clockwise sequence from $w_1$ to $w_2$ and the clockwise sequence from $w_2$ to $w_1$ are both complete internal and complete external sequences.

In order to test $G=(V,E,\mathcal{C},\Phi)$ for \nt planarity, we implicitly take into account all possible permutation assignments $\Pi$ by considering, for each non-trivial pole $w_\mu$ of each node $\mu$ of $T$, its $k!$ possible wheels and by computing their complete internal and complete external sequences.
We visit the SPQ decomposition tree $T$ from the leaves to the root and equip each node $\mu$ of $T$ with information regarding the complete internal and complete external sequences of its non-trivial poles.
Let $\mu$ be an internal node of $T$, let $v_\mu$ be a non-trivial pole of $\mu$, let $\pi_{v_\mu}$ be a permutation of the pertinent cluster $V_\mu$ of $v_\mu$, and let $W_\mu$ be the wheel of $V_\mu$ consistent with $\pi_{v_\mu}$.
We denote as $ISeq(\mu, v_{\mu},\pi_{v_\mu})$ the complete internal sequence of $v_\mu$ consistent with $\pi_{v_\mu}$ in pole $\mu$ and as $ESeq(\mu, v_{\mu},\pi_{v_\mu})$ the complete external sequence of $v_{\mu}$ consistent with $\pi_{v_\mu}$ in pole $\mu$.
We distinguish between the different types of nodes of $T$.

\smallskip\noindent
\textbf{Node $\mu$ is a Q-node.}
Since $G$ is light, at most one of its poles is non-trivial. Let $e$ be an edge of $F$ that is the pertinent graph of $\mu$. One end-vertex of $e$ is the representative vertex in $F$ of the pertinent cluster of the non-trivial pole $v_\mu$. In fact, edge $e$ corresponds to an edge $e'=(u,z)$ of $G$ such that $u \in V_\mu$ and $z$ is a trivial cluster. The side assignment $\phi_{v_\mu}$ defines whether $e$ is incident to the top, bottom, left, or right copy $u_W$ of $u$ in the wheel $W_\mu$ of $V_\mu$. For any possible permutation $\pi_{v_\mu}$ we have $ISeq(\mu, v_\mu,\pi_{v_\mu})=u_W$. If $u_W$ is labeled \texttt{int-ext}, then $ESeq(\mu, v_\mu,\pi_{v_\mu})$ is the external cycle of $W_\mu$ starting at $u_W$ and ending at $u_W$. Otherwise, traverse the external cycle of $W_\mu$ starting at $u_W$ and following the direction of the edges; $ESeq(\mu, v_\mu,\pi_{v_\mu})$ consists of all the encountered vertices from the first labeled \texttt{ext} to the last labeled \texttt{ext}.

\smallskip\noindent
\textbf{Node $\mu$ is a P-node.}
Let $\nu_0, \nu_1, \dots, \nu_{h-1}$ be the children of $\mu$. Observe that $v_\mu$ is a non-trivial pole also for the children $\nu_0, \nu_1, \dots, \nu_{h-1}$ of $\mu$. We consider every permutation $\pi_{v_\mu}$ such that $\nu_0, \nu_1, \dots, \nu_{h-1}$ have both a complete internal sequence and a complete external sequence compatible with $\pi_{v_\mu}$.
The complete internal sequence of $v_\mu$ consistent with $\pi_{v_\mu}$ is the union of the complete internal sequences of the children $\nu_0, \nu_1, \dots, \nu_{h-1}$, that is $ISeq(\mu, v_\mu,\pi_{v_\mu})= \displaystyle\cup_{i=0}^{h-1} ISeq(\nu_i,v_\mu,\pi_{v_\mu})$.

To determine the complete external sequence of $v_\mu$ consistent with $\pi_{v_\mu}$ we consider the intersection of the complete external sequences of the children of $\mu$.
If this intersection consists of exactly one sequence of consecutive vertices, then $ESeq(\mu, v_\mu,\pi_{v_\mu})= \displaystyle\cap_{i=0}^{h-1} ESeq(\nu_i,v_\mu,\pi_{v_\mu})$. Otherwise (i.e., the intersection is empty or it consists of more than one sequence of consecutive vertices), $v_\mu$ does not have a complete external sequence consistent with $\pi_{v_\mu}$.

\smallskip\noindent
\textbf{Node $\mu$ is an S-node.}
Let $\nu$ be the child of $\mu$ that shares the pole $v_\mu$ with $\mu$.
We consider every permutation $\pi_{v_\mu}$ such that $\nu$ has both $ISeq(\nu, v_\mu,\pi_{v_\mu})$ and $ESeq(\nu, v_\mu,\pi_{v_\mu})$.
The complete internal (external) sequence of $v_\mu$ consistent with $\pi_{v_\mu}$ is $ISeq(\mu, v_\mu,\pi_{v_\mu})=ISeq(\nu, v_\mu,\pi_{v_\mu})$ ($ESeq(\mu, v_\mu,\pi_{v_\mu})=ESeq(\nu, v_\mu,\pi_{v_\mu})$).

To test $G$ for \nt planarity we execute a bottom-up traversal of $T$ and, for each node $\mu$ with poles $s_\mu$ and $t_\mu$, we check whether each possible pair $(\pi_{s_\mu}, \pi_{t_\mu})$ induces complete internal and external sequences for $s_\mu$ and $t_\mu$ that are `compatible' with a planar embedding of the wheel reduction of $G$. If this is the case, by Theorem~\ref{th:wheel-reduction}, $G$ is \nt planar, otherwise we reject $G$.

More formally, let $\pi_{s_\mu}$ ($\pi_{t_\mu}$, respectively) be a permutation such that $s_\mu$ ($t_\mu$, respectively) has both a complete internal sequence and a complete external sequence compatible with $\pi_{s_\mu}$ ($\pi_{t_\mu}$, respectively). We say that $(\pi_{s_\mu}, \pi_{t_\mu})$ is a \emph{compatible pair of permutations for $\mu$} if either one of the poles is a trivial pole or one of the following cases applies.

\smallskip\noindent
\textbf{Node $\mu$ is a Q-node.}
In this case all $k!$ possible pairs of permutations for $s_\mu$ or $t_\mu$ (recall that only one of them is non-trivial) are compatible for $\mu$.

\smallskip\noindent
\textbf{Node $\mu$ is a P-node.}
Let $\nu_0, \nu_1, \dots, \nu_{h-1}$ be the children of $\mu$. Consider a pair of permutations $(\pi_{s_\mu}, \pi_{t_\mu})$; we recall that, for $i=0,\dots,h-1$, each $\nu_i$  has poles $s_\mu$ and $t_\mu$.
A first condition for pair $(\pi_{s_\mu}, \pi_{t_\mu})$ to be a compatible pair for $\mu$ is that $(\pi_{s_\mu}, \pi_{t_\mu})$ is also a compatible pair for $\nu_i$, with $i=0,\dots,h-1$.
A second condition asks that the pair $(\pi_{s_\mu}, \pi_{t_\mu})$ \emph{defines opposite orders on the poles of $\mu$}.
Namely, let $W^s_\mu$ (resp., $W^t_\mu$) be the wheel of $V_{s_\mu}$ (resp., $V_{t_\mu}$) consistent with $\pi_{s_\mu}$ (resp., $\pi_{t_\mu}$).
Traversing clockwise the external cycle of $W^s_\mu$ starting from the first vertex of $ESeq(\mu,s_\mu,\pi_{s_\mu})$, let $ISeq(\nu_0,s_\mu,\pi_{s_\mu})$, $ISeq(\nu_1,s_\mu,\pi_{s_\mu})$, \dots, $ISeq(\nu_{h-1},s_\mu,\pi_{s_\mu})$ be the order by which the internal sequences are encountered.
Pair $(\pi_{s_\mu}, \pi_{t_\mu})$ defines opposite orders on the poles of $\mu$ if, traversing clockwise the external cycle of $W^t_\mu$ starting from the first vertex of $ESeq(\mu,t_\mu,\pi_{s_\mu})$, the order by which we encounter the internal sequences of $\nu_0,\nu_1,\dots, \nu_{h-1}$ is the opposite one, i.e., the order is $ISeq(\nu_{h-1},t_\mu,\pi_{t_\mu})$, $ISeq(\nu_{h-2},t_\mu,\pi_{t_\mu})$, \dots, $ISeq(\nu_0,t_\mu,\pi_{t_\mu})$.

\smallskip\noindent
\textbf{Node $\mu$ is an S-node.}
Let $\nu_0$ and $\nu_1$ be the children of $\mu$ such that $s_{\nu_0}=s_\mu$, $t_{\nu_0}=s_{\nu_1}$, and $t_{\nu_1}=t_\mu$.
A pair $(\pi_{s_\mu}, \pi_{t_\mu})$ is a compatible pair for $\mu$ if there exists a permutation $\pi_{t_{\nu_0}}$ such that the pair $(\pi_{s_\mu}, \pi_{t_{\nu_0}})$ is compatible for $\nu_0$ and the pair $(\pi_{t_{\nu_0}}, \pi_{t_\mu})$ is compatible for $\nu_1$.

Fig.~\ref{fi:p-wheels} suggests that a \nt planar representation of a clustered graph $G$ defines a permutation assignment $\Pi$ such that, for every node $\mu$ of $T$, pair $(\pi_{s_\mu},\pi_{t_\mu})$ is a compatible pair for $\mu$.

\begin{lemma}\label{le:compatible-pairs}
Let $G=(V,E,\mathcal{C},\Phi)$ be a clustered graph with side assignment $\Phi$ and let $T$ be the SPQ decomposition tree of the frame graph of $G$. Graph $G$ is \nt planar if and only if there exists a permutation assignment $\Pi$ such that, for every node $\mu$ of $T$ with poles $s_\mu$ and $t_\mu$, we have that permutation $\pi_{s_\mu} \in \Pi$ and permutation $\pi_{t_\mu} \in \Pi$ form a compatible pair of permutations for~$\mu$.
\end{lemma}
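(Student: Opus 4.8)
The plan is to first invoke Theorem~\ref{th:wheel-reduction}, which reduces the \nt planarity of $G$ to the existence of a permutation assignment $\Pi$ for which the wheel reduction of $G$ admits a planar embedding with every external wheel-cycle oriented clockwise. It then suffices to prove that such an embedding exists if and only if $\Pi$ induces a compatible pair at every node of $T$. I would establish this by structural induction on $T$, visiting it from the leaves to the root, and maintaining a single inductive invariant at each node $\mu$ with poles $s_\mu$ and $t_\mu$: the wheel reduction of the pertinent subgraph $G_\mu$ has a planar embedding with clockwise external cycles in which both poles lie on the outer face, the images of the intra-component edges of each pole $v_\mu$ occupy exactly the arc $ISeq(\mu,v_\mu,\pi_{v_\mu})$ of its wheel, and the images of the extra-component edges face the outer region along the arc $ESeq(\mu,v_\mu,\pi_{v_\mu})$, if and only if $(\pi_{s_\mu},\pi_{t_\mu})$ is a compatible pair for $\mu$ that extends to compatible pairs throughout the subtree rooted at $\mu$.

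For the base case, a Q-node corresponds to a single inter-cluster edge $e'=(u,z)$ with $u$ in the non-trivial pole; the side assignment fixes the copy $u_W$ of $u$ on the wheel to which the image of $e'$ attaches, so the complete internal sequence is the single vertex $u_W$ and the complete external sequence is the maximal clockwise run of \texttt{ext}-labelled vertices around it, exactly as defined, and every permutation is trivially compatible, matching the (trivial) embeddability of a single edge. For an S-node the two poles $s_\mu$ and $t_\mu$ are separated by the middle pole $t_{\nu_0}=s_{\nu_1}$, so the sequences at $s_\mu$ and $t_\mu$ are inherited unchanged from $\nu_0$ and $\nu_1$, respectively; gluing the two child embeddings at the middle wheel is planar precisely when a common permutation $\pi_{t_{\nu_0}}$ makes $(\pi_{s_\mu},\pi_{t_{\nu_0}})$ compatible for $\nu_0$ and $(\pi_{t_{\nu_0}},\pi_{t_\mu})$ compatible for $\nu_1$, which is the stated S-node condition.

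The heart of the argument is the P-node case, which I expect to be the main obstacle. Here $F_\mu$ is the parallel composition of $F_{\nu_0},\dots,F_{\nu_{h-1}}$ between the common poles $s_\mu,t_\mu$. In any planar embedding of a parallel composition with both poles on the outer face, the children form a linearly ordered bundle between the poles, so the intra-component edges of $s_\mu$ coming from distinct children occur in consecutive blocks around $W^s_\mu$, and the corresponding blocks around $W^t_\mu$ occur in the reverse order; this is exactly the requirement that $(\pi_{s_\mu},\pi_{t_\mu})$ \emph{defines opposite orders on the poles of $\mu$}. Moreover, the true extra-component edges of $\mu$ must lie outside every child simultaneously, which forces $ESeq(\mu,v_\mu,\pi_{v_\mu})$ to be the intersection of the children's external sequences; demanding that this intersection be a single run of consecutive vertices is precisely what guarantees a contiguous slot on the outer face for these edges while the children are nested inside. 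I would prove both directions of the induction at a P-node by (i) assembling the children's embeddings, provided by the inductive hypothesis, into the nested drawing dictated by the opposite-orders condition and checking planarity together with the claimed outer-face arcs, and conversely (ii) restricting any valid embedding of the wheel reduction of $G_\mu$ to each child, invoking the inductive hypothesis, and reading off that the induced child orders are opposite on the two poles and that the external arcs meet in a single run.

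Finally, at the root Q-node the pertinent subgraph is all of $G$, and combining the reference edge with its unique child through the Q-node condition yields a planar embedding of the entire wheel reduction with clockwise external cycles exactly when compatible pairs exist throughout $T$; by Theorem~\ref{th:wheel-reduction} this is equivalent to the \nt planarity of $G$, closing the equivalence. The bookkeeping that demands the most care is verifying that the labels \texttt{void}, \texttt{int}, \texttt{ext}, and \texttt{int-ext}, together with the resulting complete sequences, faithfully record the rotation of edge-images around each pole wheel through every composition step; once this correspondence is pinned down, the equivalence at each node type follows from the standard characterization of the planar embeddings of series-parallel graphs.
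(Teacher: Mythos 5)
Your proposal is correct and follows essentially the same route as the paper's proof: both reduce the problem via Theorem~\ref{th:wheel-reduction} to finding a planar embedding of the wheel reduction with clockwise external cycles, and both establish the equivalence by a leaves-to-root traversal of the SPQ tree, composing child embeddings at S- and P-nodes (with the opposite-orders condition doing the work at P-nodes) for sufficiency, and reading compatible pairs off a restricted embedding for necessity. Your version merely makes explicit, as a stated inductive invariant, the correspondence between $ISeq$/$ESeq$ arcs and the pertinent subgraph's embedding that the paper leaves implicit.
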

\begin{proof}
	We prove first that, if $G=(V,E,\mathcal{C},\Phi)$ is \nt planar, then there exists a permutation assignment $\Pi$ such that, for every node $\mu$ of $T$ with poles $s_\mu$ and $t_\mu$, the pair $(\pi_{s_\mu}, \pi_{t_\mu})$ is compatible for~$\mu$.
	
	Let $\Gamma$ be a \nt planar representation of $G$ with side assignment $\Phi$ and let $M_0, M_1, \dots, M_{h-1}$ be the matrices representing the non-trivial clusters of $G$.
	For each matrix $M_i$ ($i=0,\dots,h-1$) of $\Gamma$, let $\pi_i=v_0, v_1, \dots, v_{k-1}$ be the left to right order of the columns of $M_i$.
	We replace $M_i$ with a wheel $W_i$ consisting of a vertex $w_i$ of degree $4k$ adjacent to all vertices of a cycle $v_{0,\textrm{\sc{t}}}, v_{1,\textrm{\sc{t}}}, \dots, v_{k-1,\textrm{\sc{t}}}$, $v_{0,\textrm{\sc{r}}}, v_{1,\textrm{\sc{r}}}, \dots, v_{k-1,\textrm{\sc{r}}}$, $v_{k-1,\textrm{\sc{b}}}, v_{k-2,\textrm{\sc{b}}}, \dots, v_{0,\textrm{\sc{b}}}$, $v_{k-1,\textrm{\sc{l}}}$, $v_{k-2,\textrm{\sc{l}}}$, $\dots, v_{0,\textrm{\sc{l}}}$.
	For all $j= 0, 1, \dots, k-1$ and for all $\textrm{\sc{x}} \in \{ \textrm{\sc{t}}, \textrm{\sc{b}}, \textrm{\sc{l}}, \textrm{\sc{r}} \}$, vertex $v_{j,\textrm{\sc{x}}}$ is drawn at the point $p_{v_j,\textrm{\sc{x}}}$, that represents the attachment of the inter-cluster edges incident to vertex $v_j$ on the side $\textrm{\sc{x}}$ of matrix $M_i$. The edges of the external cycle of $W_i$ are drawn along the external boundary of $M_i$.
	Every inter-cluster edge $e=(u,v_j)$, with $v_j \in M_i$, is incident to the vertex $v_{j,\phi_i(e)}$ of the wheel $W_i$. Also, for all $j= 0, 1, \dots, k-1$ and for all $\textrm{\sc{x}} \in \{ \textrm{\sc{t}}, \textrm{\sc{b}}, \textrm{\sc{l}}, \textrm{\sc{r}} \}$, the cyclic order of the inter-cluster edges incident to $v_{j,\textrm{\sc{x}}}$ in $W_i$ is the same as the cyclic order of the inter-cluster edges incident to $p_{v_j,\textrm{\sc{x}}}$ in $M_i$.
	It is straightforward to verify that the computed drawing defines a planar embedding for the wheel reduction of $G=(V,E,\mathcal{C},\Phi)$ consistent with $\Pi=\{\pi_1, \pi_2, \dots, \pi_{|\mathcal{C}|}\}$. From the planarity of the wheel reduction of $G$ it follows that each non-trivial pole $v_\mu$ of the frame graph $F$ has a complete internal and a complete external sequence consistent with $\pi_{v_\mu}$ and that for every node $\mu$ of the SPQ decomposition tree of $F$
	having poles $s_\mu$ and $t_\mu$, the pair $(\pi_{s_\mu},\pi_{t_\mu})$, with $\pi_{s_\mu},\pi_{t_\mu} \in \Pi$, is compatible for~$\mu$. An example of the above described procedure is illustrated in Fig.~\ref{fi:p-wheels}.
	
	We now show that, if there exists a permutation assignment $\Pi$ such that for every node $\mu$ of $T$ with poles $s_\mu$ and $t_\mu$ we have that permutation pair $(\pi_{s_\mu},\pi_{t_\mu})$ is compatible for~$\mu$, then $G=(V,E,\mathcal{C},\Phi)$ is \nt planar with side assignment $\Phi$. We construct a planar embedding of the wheel reduction of $G$ consistent with $\Pi$ such that all external cycles of the wheels are embedded clockwise which, by Theorem~\ref{th:wheel-reduction}, implies that $G$ is \nt planar.
	Let $W_{s_\mu}$ and $W_{t_\mu}$ be the two wheels consistent with $\pi_{s_\mu}$ and $\pi_{t_\mu}$ of $s_\mu$ and $t_\mu$, respectively. We visit $T$ from the leaves to the root and incrementally construct the desired planar embedding of the wheel reduction of~$G$.
	
	If the visited node $\mu$ is a Q-node, at most one of its poles is non-trivial because $G$ is light; assume, without loss of generality, that the non-trivial pole of $\mu$ is $s_\mu$ and let $V_{s_\mu}$ be the cluster of $G$ represented by $s_\mu$ in the frame graph of $G$. We embed the wheel $W_{s_\mu}$ of $V_{s_\mu}$ consistent with $\pi_{s_\mu} \in \Pi$ such that, when traversing the edges of the external cycle of $W_{s_\mu}$ in the forward direction, the cycle is traversed clockwise. We embed $t_\mu$ in the external face of $W_{s_\mu}$ and planarly connect the top, bottom, left, or right copy of its end-vertex on $W_{s_\mu}$ as specified by $\Phi$.
	
	Suppose now $\mu$ is an S-node and let $\nu_0$ and $\nu_1$ be the children of $\mu$ such that $s_{\nu_0}=s_\mu$, $t_{\nu_0}=s_{\nu_1}$, and $t_{\nu_1}=t_\mu$. The planar embedding of the wheel reduction at node $\mu$ is obtained by composing the planar embedding of the wheel reduction at node $\nu_0$ with the planar embedding of the wheel reduction at $\nu_1$. This is done by identifying the planar embedding of the wheel $W_{t_{\nu_0}}$ consistent with $\pi_{t_{\nu_0}}$ with the planar embedding of the wheel $W_{s_{\nu_1}}$ consistent with $\pi_{s_{\nu_1}}$. Note that this is possible because $\pi_{t_{\nu_0}}$ is the same as $\pi_{s_{\nu_1}}$, $(\pi_{s_{\mu}},\pi_{t_{\nu_0}})$ is a compatible pair for $\nu_0$, and $(\pi_{s_{\nu_1}},\pi_{t_{\mu}})$ is a compatible pair for $\nu_1$.
	
	Finally, assume $\mu$ is a P-node and let $\nu_0, \nu_1, \dots, \nu_{h-1}$ be the children of $\mu$. Similarly to the case of the S-node, the planar embedding of the wheel reduction at node $\mu$ is obtained by composing the planar embeddings of the wheel reductions at nodes $\nu_0, \nu_1, \dots, \nu_{h-1}$. Since pair $(\pi_{s_\mu}, \pi_{t_\mu})$ is compatible for $\mu$, it defines opposite orders on the poles of $\mu$. These opposite circular orders correspond to a planar embedding of the wheel reduction at $\mu$ obtained by combining the planarly embedded wheel reductions at its children $\nu_0, \nu_1, \dots \nu_{h-1}$.
	It follows that $G=(V,E,\mathcal{C},\Phi)$ is \nt planar with permutation assignment~$\Pi$.\qed
\end{proof}

\begin{lemma}\label{le:series-parallel}
Let $G=(V,E,\mathcal{C},\Phi)$ be a flat clustered series-parallel graph with side assignment $\Phi$. Let $k$ be the maximum size of any cluster in $\mathcal{C}$ and let $n$ be the cardinality of $V$. There exists an
$O(k^{3k+\frac{3}{2}} \cdot n)$-time
algorithm that tests whether $G$ is \nt planar with side assignment $\Phi$ and if so, it computes a \nt planar representation of $G$ consistent with $\Phi$.
\end{lemma}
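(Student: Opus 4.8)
The plan is to turn the characterization of Lemma~\ref{le:compatible-pairs} into a bottom-up dynamic program on the SPQ decomposition tree $T$ of the frame graph $F$. First I would build $T$, which has $O(n)$ nodes since $F$ is a light frame with $O(n)$ vertices and edges and every S-node is assumed to be binary, in linear time. Then I would visit $T$ from the leaves to the root and equip every node $\mu$ with two pieces of information: for each non-trivial pole $v_\mu$ and each of the at most $k!$ permutations $\pi_{v_\mu}$ of the pertinent cluster of $v_\mu$, the complete internal and external sequences $ISeq(\mu,v_\mu,\pi_{v_\mu})$ and $ESeq(\mu,v_\mu,\pi_{v_\mu})$; and a Boolean table indexed by the pairs $(\pi_{s_\mu},\pi_{t_\mu})$ recording which pairs are compatible for $\mu$. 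By Lemma~\ref{le:compatible-pairs}, $G$ is \nt planar if and only if, after processing the root, this table contains at least one compatible pair, so the acceptance test is a single inspection of the root table.

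The recurrences at each node follow verbatim the definitions preceding Lemma~\ref{le:compatible-pairs}, so correctness is immediate once the program implements them faithfully. At a Q-node the sequences are read off from the single incident inter-cluster edge and the side assignment $\Phi$, and all $k!$ pairs are declared compatible. At a P-node with children $\nu_0,\dots,\nu_{h-1}$ the internal (external) sequence of a pole is the union (intersection) of the children's sequences, and a pair is compatible iff it is compatible for every child and the two poles induce opposite circular orders of the children's internal sequences. At an S-node with children $\nu_0,\nu_1$ a pair $(\pi_{s_\mu},\pi_{t_\mu})$ is declared compatible iff some middle permutation $\pi_{t_{\nu_0}}$ makes $(\pi_{s_\mu},\pi_{t_{\nu_0}})$ compatible for $\nu_0$ and $(\pi_{t_{\nu_0}},\pi_{t_\mu})$ compatible for $\nu_1$. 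To output a representation when the test succeeds, I would run a second, top-down pass that fixes one compatible permutation per pole (using the stored tables, and at each S-node the witnessing middle permutation), assemble the clockwise planar embedding of the wheel reduction exactly as in the \emph{if} direction of the proof of Lemma~\ref{le:compatible-pairs}, and then convert it into a \nt planar representation consistent with $\Phi$ via Theorem~\ref{th:wheel-reduction}.

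For the running time, each wheel has an external cycle of $4k$ vertices, so computing and labeling the sequences for one permutation costs $O(k)$ and for one pole $O(k\cdot k!)$. Each compatibility table has at most $(k!)^2$ entries. At a P-node, testing one entry amounts to traversing the two poles' external cycles and matching the $h$ children's internal sequences in $O(hk)$ time; summed over all entries and all P-nodes this is $O((k!)^2\cdot k\cdot n)$, since the total number of parent--child incidences in $T$ is $O(n)$. The dominant cost comes from the S-nodes: for each of the $(k!)^2$ target pairs the recurrence quantifies existentially over the $k!$ middle permutations, giving $(k!)^3$ work per S-node and $O((k!)^3\cdot n)$ overall. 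Using the Stirling estimate $k!=O(k^{k+1/2})$ yields $(k!)^3=O(k^{3k+3/2})$, hence the claimed total running time $O(k^{3k+\frac{3}{2}}\cdot n)$. The main obstacle is precisely this S-node existential over the middle permutation, which is what forces the cubic-in-$k!$ factor; the remaining care lies in checking that the union/intersection of sequences and the opposite-order test at P-nodes can indeed be performed in time linear in $k$ per table entry, so that the P-node contribution stays dominated by the S-node term.
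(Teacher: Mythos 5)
Your proposal is correct and follows essentially the same approach as the paper: a bottom-up dynamic program on the SPQ tree storing complete internal/external sequences and $(k!)^2$-size compatibility tables per node, with the S-node existential over the middle permutation giving the dominant $(k!)^3$-per-node cost, converted to $O(k^{3k+\frac{3}{2}}\cdot n)$ via Stirling. Your sequence and P-node operations cost an extra factor of $k$ compared to the paper's constant-time encoding of sequences (by permutation plus first/last vertex), but since $k\le k!$ this is still dominated by the S-node term, so the stated bound holds.
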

\begin{proof}
Let $F$ be the frame graph of $G$. 
We construct the SPQ decomposition tree $T$ of $G$ rooted at an arbitrary Q-node.
We visit $T$ from the leaves to the root and test whether $G$ has a permutation assignment $\Pi$ such that $G=(V,E,\mathcal{C},\Phi,\Pi)$ is \nt planar.
We first equip each non-trivial pole $v_\mu$ of every node $\mu$ of $T$ with its possible complete internal and complete external sequences.
The maximum number of complete internal sequences of $v_\mu$ is $k!$. The same is true for the complete external sequences. 
Each complete (internal or external) sequence of pole $v_\mu$ is encoded by means of the permutation $\pi_{v_\mu}$ and by the first and last vertex of the sequence in the clockwise order around $W_{v_\mu}$.
It follows that the intersection or the union of two complete internal or external sequences of the same permutation $\pi_{v_\mu}$ can be computed in constant time. Therefore, all complete internal and external sequences for each non-trivial pole of $T$ can be computed in $O(k!)$ time. Hence, the whole bottom-up traversal to equip all non-trivial poles with every possible complete internal/external sequence can be executed in $O(k!\cdot n)$ time.
We now test whether there exists a permutation assignment $\Pi$ such that any node $\mu$ of $T$ has a compatible pair of permutations. To this aim, we look at the complete internal and external sequences for the pair of poles of the children of $\mu$. For each pair $(\pi_{s_\mu}, \pi_{t_\mu})$ of permutations of the poles of $\mu$ we equip $\mu$ with the information about whether such a pair is compatible for $\mu$. This requires $O(k!^2)$ space.

If $\mu$ is a Q-node, every pair of permutations $(\pi_{s_\mu},\pi_{t_\mu})$ is compatible for $\mu$. It follows that all compatible pairs for $\mu$ can be computed in $O(k!)$ time (recall that one between $s_{\mu}$ and $t_{\mu}$ is non-trivial) and, hence, in $O(k! \cdot n)$ time for all the Q-nodes of $T$.

If $\mu$ is a P-node with children $\nu_0, \nu_1, \dots, \nu_{h-1}$, $\pi_{s_\mu}$ is one of the permutations that equip $s_\mu$, and $\pi_{t_\mu}$ is one of the permutations that equip $t_\mu$, testing whether the pair $(\pi_{s_\mu}, \pi_{t_\mu})$ is a compatible pair for $\mu$ can be executed in $O(h)$ time. It follows that all compatible pairs for $\mu$ can be computed in $O(k!^2 \cdot h)$ time and, hence, in $O(k!^2 \cdot n)$ time for all P-nodes of $T$.

If $\mu$ is an S-node with children $\nu_0$ and $\nu_1$, $\pi_{s_\mu}$ is one of the permutations that equip $s_\mu$, and $\pi_{t_\mu}$ is one of the permutations that equip $t_\mu$, testing whether the pair $(\pi_{s_\mu}, \pi_{t_\mu})$ is a compatible pair for $\mu$ can be executed in $O(k!)$ time, corresponding to choosing all possible permutations for the pole shared between $\nu_0$ and $\nu_1$. It follows that all compatible pairs for $\mu$ can be computed in $O(k!^3)$ time and, hence, in $O(k!^3 \cdot n)$ time for all S-nodes of $T$. 

Hence, the overall cost of the above described algorithm is $O(k!^3 \cdot n)$. 
It remains to prove that $O(k!^3 \cdot n) = O(k^{3k+\frac{3}{2}} \cdot n)$. 
By Stirling's approximation, $k! \sim \sqrt{2\pi k}(\frac{k}{e})^k$ and thus a series-parallel clustered graph $G$ with $n$ vertices, side assignment $\Phi$, and maximum cluster size $k$ can be tested for \nt planarity in $O(k^{3k+\frac{3}{2}} \cdot n)$ time. Note that the compatible pair of permutations stored at each node $\mu$ of $T$ implicitly define a planar embedding of a wheel reduction of $G$. It can be shown that it is possible to construct a \nt planar representation of~$G$ in time proportional to the number of edges of $G$, which is $O(n\cdot k)$~\cite{dlpt-pkng-17}. The statement of the lemma follows.\qed
\end{proof}

%
%
\subsection{Partial 2-Trees}\label{sse:partial-2-trees}

We now consider clustered graphs whose cluster size is at most $k$ and such that their frame graph is a partial $2$-tree, i.e., it is a planar graph whose biconnected components are series-parallel.
We handle this case by decomposing the frame graph into its blocks and we store them into a block-cut-vertex tree.
The following theorem generalizes the result of Lemma~\ref{le:series-parallel}.

\begin{theorem}\label{th:partial-2-tree}
Let $G=(V,E,\mathcal{C},\Phi)$ be a flat clustered partial $2$-tree with side assignment $\Phi$. Let $k$ be the maximum size of any cluster in $\mathcal{C}$ and let $n$ be the cardinality of $V$. There exists an $O(k^{3k+\frac{3}{2}}\cdot n)$-time algorithm that tests whether $G$ is \nt planar with side assignment $\Phi$ and if so, it computes a \nt planar representation of $G$ consistent with $\Phi$.
\end{theorem}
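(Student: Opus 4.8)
The plan is to lift the algorithm of Lemma~\ref{le:series-parallel} from biconnected frames to arbitrary partial $2$-trees by working on the block-cut-vertex tree of the frame graph $F$. First I would compute the block-cut-vertex tree $\mathcal{B}$ of $F$ and root it at an arbitrary block. Since $F$ is a partial $2$-tree, every block is series-parallel, so each block $B$ can be processed with the SPQ-based dynamic program of Lemma~\ref{le:series-parallel}. I would process $\mathcal{B}$ from the leaves to the root: whenever a block $B$ has parent cut-vertex $c$, I root the SPQ decomposition tree of $B$ at a Q-node incident to $c$, so that $c$ is a distinguished pole of the root, and I record, for every permutation $\pi_c$ of the pertinent cluster of $c$, the complete internal sequence that $B$ (together with everything already processed below it) contributes to the wheel $W_c$. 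The only genuinely new situation with respect to the series-parallel case arises when such a cut-vertex $c$ is a non-trivial cluster shared by several blocks.

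For such a cut-vertex $c$, the permutation $\pi_c$ of its cluster is the same in all incident blocks, hence the dynamic program must \emph{synchronize} the choice of $\pi_c$ across them; this is the cut-vertex analogue of the synchronization already imposed on the two poles of a P-node. The combination step itself is a simplified P-node merge: in any planar embedding the edges belonging to one block are consecutive in the rotation around $c$ (the standard contiguity property of blocks around a cut-vertex), which, after replacing $c$ by its wheel, means that every child block occupies a \emph{contiguous arc} of the external cycle of $W_c$. Thus, for each fixed $\pi_c$, I would check that the complete internal sequences contributed by the child blocks of $c$ are pairwise non-overlapping contiguous arcs of $W_c$; if so, their union is the internal sequence passed upward through the parent block of $c$, and its complement on the external cycle plays the role of the complete external sequence. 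Because a single shared pole imposes no ``opposite order'' constraint, the arcs may be realized in any cyclic order, so disjointness of the (position-forced) arcs is exactly the condition to verify. Trivial cut-vertices impose no constraint at all, since a point admits every cyclic order of its incident blocks.

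Correctness would follow the template of Lemma~\ref{le:compatible-pairs} combined with Theorem~\ref{th:wheel-reduction}. In one direction, a \nt planar representation of $G$ induces a permutation assignment $\Pi$; restricting the associated planar wheel reduction to each block yields compatible pairs for the nodes of that block's SPQ tree (Lemma~\ref{le:series-parallel}) and, at every shared cut-vertex, disjoint arcs under a single consistent permutation. In the other direction, given a permutation assignment certified block-by-block and arc-disjoint at each cut-vertex, I would assemble a planar embedding of the global wheel reduction with all external cycles clockwise, by first building the embedding of each block as in Lemma~\ref{le:series-parallel} and then gluing the blocks around each shared wheel $W_c$ in the cyclic order witnessed by their disjoint arcs; Theorem~\ref{th:wheel-reduction} then certifies \nt planarity. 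For the running time, computing $\mathcal{B}$ and the SPQ trees of all blocks is linear, each block of $n_B$ vertices costs $O(k!^3\cdot n_B)$ as in Lemma~\ref{le:series-parallel}, and the merge at a cut-vertex of degree $d$ in $\mathcal{B}$ costs $O(k!\cdot d)$ once the arcs are encoded by their endpoints; summing over the tree keeps the total at $O(k!^3\cdot n)=O(k^{3k+\frac{3}{2}}\cdot n)$ by Stirling's approximation, and the representation itself is reconstructed in $O(nk)$ time.

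The step I expect to be the main obstacle is making the cut-vertex merge precise and proving it correct: one must argue rigorously that, in a \nt planar representation, the subgraphs attached to a shared matrix $M_c$ meet its boundary in disjoint contiguous arcs whose positions are forced by $\pi_c$ and $\Phi$, and conversely that any collection of such disjoint arcs can be simultaneously realized around $W_c$ while remaining consistent with the embeddings already chosen inside the individual blocks. Formalizing the interaction between the permutation synchronization at $c$ and the internal/external sequences computed inside each block, and checking that the bookkeeping does not inflate the per-node cost, is where the care is needed.
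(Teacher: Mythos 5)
Your proposal is correct and follows essentially the same approach as the paper's proof: a post-order traversal of the block-cut-vertex tree of the frame, running the algorithm of Lemma~\ref{le:series-parallel} on each series-parallel block, and synchronizing at every shared cut-vertex $c$ both the choice of a single permutation $\pi_c$ across all incident blocks and the pairwise non-overlap of the complete internal sequences they contribute to the wheel $W_c$ (including the check against the parent block), with the identical $O(k!^3\cdot n)=O(k^{3k+\frac{3}{2}}\cdot n)$ accounting. The paper's own write-up is in fact terser than yours; the cut-vertex merge you flag as the delicate step is exactly conditions (i) and (ii) in the paper's proof.
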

\begin{proof}
	We compute a block-cut-vertex tree $T_\textsc{bcv}$ of the frame graph of $G$, we root it at a block $B_{root}$ and perform a post-order traversal of $T_\textsc{bcv}$.
	Let $B_i$ be the currently visited block and let $c$ be the parent cut-vertex of $B_i$ in $T_\textsc{bcv}$. 
	We execute the testing algorithm of Lemma~\ref{le:series-parallel} by rooting the SPQ decomposition tree of $B_i$ at an arbitrary Q-node.
	If the test fails at any block of $T_\textsc{bcv}$, we conclude that $G$ is not \nt planar with the given side assignment. 
	Otherwise, we test whether, among the permutation assignments computed for the blocks $B_0, B_1, \dots, B_{h-1}$ that are children of a same cut-vertex $c$, there exists a set $\{\Pi_0, \Pi_1, \dots, \Pi_{h-1}\}$ such that: (i) $\pi_{0,c} = \pi_{1,c}= \dots = \pi_{h-1,c}$ with $\pi_{j,c} \in \Pi_j$, for $j=0,1, \dots, h-1$, and (ii) the complete internal sequence of $c$ for the block $B_j$ and for the permutation assignment $\pi_{j,c}$ does not overlap with the complete internal sequence of $c$ for the block $B_i$ and the permutation assignment $\pi_{i,c}$, with $i \neq j$, $i,j=0,\dots,h-1$. We equip $c$ with all the permutations that pass this test. Let $B'$ be the block that is the parent of $c$ in $T_\textsc{bcv}$. When testing $B'$ for \nt planarity, we consider for $c$ only the permutations that have been computed when processing blocks $B_0, B_1, \dots, B_{h-1}$ and check that the complete internal sequences of $c$ in $B_j$, $j=0,\dots,h-1$, do not intersect with the complete internal sequences of $c$ in $B'$.
	Let $n_B$ be the number of vertices of a block $B$. By using Lemma~\ref{le:series-parallel}, the procedure described above can be executed in time $O(k^{3k+\frac{3}{2}} \cdot n_B)$ for block $B$. Therefore, the post-order traversal of $T_\textsc{bcv}$ can be computed in $O(k^{3k+\frac{3}{2}} \cdot n)$ time. \qed
\end{proof}

%
%
\section{General Planar Frame Graphs}\label{se:general-planar}

In this section we study the problem of extending Theorem~\ref{th:partial-2-tree} to planar frame graphs that may not be partial 2-trees. We prove that \nt planarity testing with fixed sides can be solved in linear time for maximum cluster size $k=2$ (Subsection~\ref{sse:polynomial}). However, the problem becomes NP-complete with fixed sides for $k>2$ (Subsection~\ref{sse:np-completeness-fixed-sides}).

%
%
\subsection{Linearity for $k=2$}\label{sse:polynomial}

In~\cite{dlpt-nptsc-17-gd} we showed that \nt planarity testing with fixed sides is polynomial for maximum cluster size $k=2$. However, the proof is rather long and the proposed algorithm is cubic. Here we show a simpler linear algorithm that solves the same problem.

In order to prove linearity for \nt planarity with fixed sides when $k=2$, we extend the set of planarity constraints of~\cite{gkm-ptoei-08} with a new constraint that we call ``synchronized mirror constraint''. According to~\cite{gkm-ptoei-08}, given a graph $G=(V,E)$, an embedding constraint at a vertex $v \in V$ is a rooted, ordered tree $T_v$ such that its leaves are exactly the edges incident to $v$. The inner nodes of $T_v$ are of three types: \emph{oc-nodes} (stands for ``oriented constraint-nodes'') whose children have fixed clockwise order; \emph{mc-nodes} (stands for ``mirror constraint-nodes'') whose children have a fixed order up to a flip; and \emph{gc-nodes} (stands for ``grouping constraint-nodes''), whose children may be permuted. Since $T_v$ is an ordered tree, it imposes an order on its leaves and thus on the edges incident to~$v$. Tree $T_v$ for a vertex $v$ without any constraint has only the root, which is a gc-node.

In~\cite{gkm-ptoei-08} it is shown how these embedding constraints can be handled in linear time by replacing each vertex with a gadget representing the embedding constraint $T_v$. Namely, each vertex $v$ is replaced by a construction reproducing the nodes and edges of tree $T_v$, where each oc-node and mc-node is replaced by a wheel graph and each gc-node is replaced by a regular node. The obtained auxiliary graph, that we denote by $H_G$, can be constructed in time $O(n)$, where $n$ is the number of vertices of $G$~\cite[Lemma 1]{gkm-ptoei-08}.
Graph $H_G$ is tested for planarity. If $H_G$ is planar, a planar embedding is constructed such that for every wheel graph corresponding to a constraint, its external cycle does not contain any vertex other than the center of the wheel in its interior.
The algorithm terminates by checking whether the rigid components of the SPQR decomposition tree of the biconnected components of $H_G$ can be flipped in such a way that all the wheels representing oc-nodes have the same clockwise order as specified by the corresponding constraints. From the obtained planar embedding of $H_G$, by contracting each tree to a vertex, a planar embedding of $G$ satisfying the input constraints can be constructed (see also~\cite{gkm-ptoei-08} for more details).

We extend the approach of~\cite{gkm-ptoei-08} by introducing a fourth type of node in the definition of tree $T_v$. Namely, an \emph{smc-node} (which stands for ``synchronized mirror constraint-node'') has, in addition to a circular order of its children, also a color (an integer). A synchronized mirror constraint is satisfied if the children of all smc-nodes with the same color all appear in the clockwise order fixed by the constraints or if they all appear in counter-clockwise order. Intuitively, smc-nodes generalize mc-nodes, allowing to reverse the order of the neighbors of some vertices in a synchronized way.
If an smc-node has a color that is not shared with other smc-nodes, then the constraint represented by such a node is equivalent to that represented by an mc-node.
Fig.~\ref{fi:mutzel-plus-plus} shows an example of a constraint tree $T_v$ containing two smc-nodes. Fig.~\ref{fi:mutzel-plus-plus-1} shows a vertex $v$ and its incident edges. Fig.~\ref{fi:mutzel-plus-plus-2} represents a set of constraints for the edges incident to $v$. Fig.~\ref{fi:mutzel-plus-plus-3} shows the replacement of the constraints with suitable gadgets in the auxiliary graph $H_G$.

\begin{figure}[htb]
\centering
\subfigure[]{\includegraphics[page=1,width=0.22\textwidth]{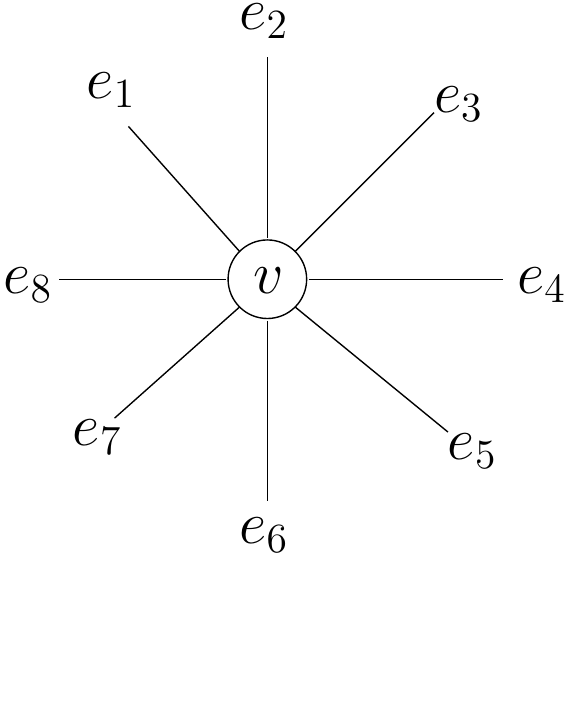}\label{fi:mutzel-plus-plus-1}}
\hfil
\subfigure[]{\includegraphics[page=2,width=0.33\textwidth]{mutzel-plus-plus.pdf}\label{fi:mutzel-plus-plus-2}}
\hfil
\subfigure[]{\includegraphics[page=3,width=0.37\textwidth]{mutzel-plus-plus.pdf}\label{fi:mutzel-plus-plus-3}}
\caption{An example to illustrate the constrained embedding of the smc-node: (a) A vertex $v$ and its incident edges. (b) The embedding constraints for $v$ described by its constraint tree $T_v$ with oc-, mc-, gc-nodes, and two smc-nodes with the same color $x$. (c) The replacement of the constraints with suitable wheels.}\label{fi:mutzel-plus-plus}
\end{figure}

Fig.~\ref{fi:mutzel-plus-plus-example-1} shows an example of a planar graph that does not admit a planar embedding if the clockwise order of the edges incident to $v_1$ is ``synchronized'' with the clockwise order of the edges incident to~$v_2$, as described by Fig.~\ref{fi:mutzel-plus-plus-example-2}. More precisely, the graph has no planar embedding if one requires that either the clockwise order of the vertices adjacent to $v_1$ is $v_3, v_2, v_4$ and the clockwise order of the vertices adjacent to $v_2$ is $v_1, v_4, v_3$ or that both these clockwise orders are reversed, as depicted in Figs.~\ref{fi:mutzel-plus-plus-example-3} and~\ref{fi:mutzel-plus-plus-example-4}.

\begin{figure}[htb]
\centering
\subfigure[]{\includegraphics[page=1,width=0.18\textwidth]{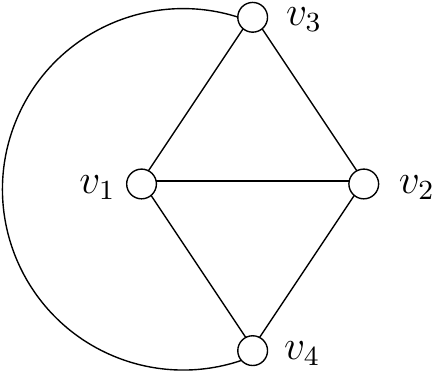}\label{fi:mutzel-plus-plus-example-1}}
\hfil
\subfigure[]{\includegraphics[page=2,width=0.25\textwidth]{mutzel-plus-plus-example.pdf}\label{fi:mutzel-plus-plus-example-2}}
\hfil
\subfigure[]{\includegraphics[page=3,width=0.20\textwidth]{mutzel-plus-plus-example.pdf}\label{fi:mutzel-plus-plus-example-3}}
\hfil
\subfigure[]{\includegraphics[page=4,width=0.20\textwidth]{mutzel-plus-plus-example.pdf}\label{fi:mutzel-plus-plus-example-4}}
\caption{An example of unsatisfiable planarity constraints involving smc-nodes: (a) A small graph $G$; (b) two synchronized embedding constraints on two vertices of the graph; (c) and (d) two non-planar drawings of the auxiliary graph $H_G$.}\label{fi:mutzel-plus-plus-example}
\end{figure}

We now show how to take into account the constraints represented by smc-nodes. In its more general setting the graph $G$ has vertices constrained by trees that contain oc-, mc-, gc-, and smc-nodes. Construct the auxiliary graph $H_G$ by first replacing each vertex $v$ of $G$ with its associated tree $T_v$ and then replacing each oc-node, mc-node, and smc-node of $T_v$ with a suitable wheel graph. In general, graph $H_G$ may have several connected components. Consider the block-cut-vertex tree for each such component. Note that each wheel subgraph of $H_G$ is entirely contained inside a single block. Each block can be associated with an SPQR-tree and each wheel subgraph is entirely contained in the skeleton of an R-node of some SPQR-tree (see Observation 1 of~\cite{gkm-ptoei-08}).

Let $\Gamma(H_G)$ be an arbitrary planar embedding of $H_G$ and let $\mu$ be an R-node of some SPQR-tree associated with a block of the block-cut-vertex tree of $H_G$.
We associate $\skel(\mu)$ with a Boolean variable $x_\mu$ and we say that, in an embedding $\Gamma'(H_G)$ of $H_G$, $x_\mu$ is \texttt{true} if $\skel(\mu)$ is embedded as in $\Gamma(H_G)$, while $x_\mu$ is \texttt{false} otherwise.

Let $\mu$ be an R-node whose skeleton $\skel(\mu)$ entirely contains the wheel graph corresponding to an oc-node. The presence of the oc-node imposes a constraint on the two possible embeddings of $\skel(\mu)$. Namely, only the embedding of $\skel(\mu)$ where the wheel graph is embedded clockwise satisfies the oc-node constraint. This implies a truth value (\texttt{true} or \texttt{false}) for the associated variable $x_\mu$. Note that, if two oc-nodes are contained in the same $\skel(\mu)$ and they assign opposite values to $x_\mu$, then $G$ does not admit an embedding that satisfies the given constraints.

Let $\mu$ be an R-node whose skeleton $\skel(\mu)$ entirely contains the wheel graph $W$ corresponding to an smc-node with color $c$. Consider any other skeleton $\skel(\mu')$ containing a wheel graph $W'$ corresponding to an smc-node with the same color $c$. If in $\Gamma(H_G)$ the wheels $W$ and $W'$ are both oriented either clockwise or counter-clockwise, then $x_\mu = x_{\mu'}$. Otherwise, $x_\mu = \overline{x}_{\mu'}$.

The above constraints easily translate to suitable 2SAT clauses. Namely, $x_\mu = x_{\mu'}$ can be expressed as $(x_\mu \vee \overline{x}_{\mu'}) \wedge (\overline{x}_\mu \vee x_{\mu'})$, while $x_\mu = \overline{x}_{\mu'}$ can be expressed as $(x_\mu \vee x_{\mu'}) \wedge (\overline{x}_\mu \vee \overline{x}_{\mu'})$. Therefore, any solution of the 2SAT instance corresponds to an embedding of $H_G$ that satisfies the given constraints.

The assignments that satisfy the 2SAT formula correspond bijectively to the planar embeddings of $H_G$ that satisfy the constraints on the ordering of the edges around the vertices expressed by the oc-, mc-, gc-, and smc-nodes. In particular, it suffices to have one arbitrary satisfying assignment to construct such an embedding. The embedding of graph $G$ is obtained by contracting all the edges of $H_G$ that are not edges of $G$, analogously to~\cite{gkm-ptoei-08}.
The discussion above implies the following.

\begin{theorem}\label{th:mutzel++}
Let $G$ be an $n$-vertex graph with embedding constraints $C$ modeled by oc-, mc-, gc-, and smc-nodes. Then, we can test in time $O(n)$ whether $G$ admits a planar embedding satisfying all  constraints in $C$. In the affirmative case, one such planar embedding can be computed in time $O(n)$.
\end{theorem}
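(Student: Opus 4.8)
The plan is to reduce the problem to the constrained-planarity framework of Gutwenger \emph{et al.}~\cite{gkm-ptoei-08} and then to capture the new smc-constraints by a $2$SAT instance of linear size. First I would build the auxiliary graph $H_G$ exactly as in~\cite{gkm-ptoei-08}: each vertex $v$ is replaced by its constraint tree $T_v$, each oc-, mc-, and smc-node is replaced by a wheel gadget, and each gc-node by a single vertex. Since an smc-node is realized by the very same wheel gadget as an mc-node, the construction of \cite[Lemma~1]{gkm-ptoei-08} applies unchanged and produces $H_G$ in $O(n)$ time. I would then test $H_G$ for planarity; if it is not planar, the instance is rejected, since in that case $G$ admits no embedding satisfying even the oc-, mc-, and gc-constraints. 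Otherwise I fix an arbitrary reference embedding $\Gamma(H_G)$ and compute, in overall $O(n)$ time, the block-cut-vertex tree of each component and the SPQR-tree of each block, recording for every wheel gadget the unique R-node in whose skeleton it lies (its existence is guaranteed by Observation~1 of~\cite{gkm-ptoei-08}).

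The core step is the $2$SAT encoding described above. For every R-node $\mu$ whose skeleton contains a constrained (oc- or smc-) wheel I introduce the Boolean variable $x_\mu$, set to \texttt{true} exactly when $\skel(\mu)$ is embedded as in $\Gamma(H_G)$. Each oc-wheel in $\skel(\mu)$ forces a fixed value of $x_\mu$ (a unit clause), and two oc-wheels in the same skeleton forcing opposite values cause an immediate rejection. For the smc-wheels I process one color class at a time: I pick a representative wheel of the class and, for every other wheel of the same color, add the two clauses encoding $x_\mu=x_{\mu'}$ or $x_\mu=\overline{x}_{\mu'}$, depending on whether the two wheels have the same or opposite orientation in $\Gamma(H_G)$. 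Linking each smc-wheel only to its representative, rather than adding all pairwise clauses, keeps the number of clauses proportional to the number of wheels, which is $O(n)$; the $2$SAT instance can therefore be solved in $O(n)$ time. If it is unsatisfiable I reject; otherwise, from any satisfying assignment I flip the skeleton of every R-node whose variable is \texttt{false}, leave all remaining embedding choices (P-node permutations realizing gc-nodes and flips of unconstrained R-nodes) arbitrary, and finally contract every edge of $H_G$ that is not an edge of $G$, exactly as in~\cite{gkm-ptoei-08}, to obtain the desired embedding of $G$.

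The step I expect to be delicate is establishing the bijection between satisfying assignments and valid embeddings. The key fact I would isolate is that the combinatorial embedding of each block is determined by the flips of its R-nodes and the permutations at its P-nodes, while S-nodes are rigid; since a wheel is triconnected, its clockwise-versus-counter-clockwise orientation in any embedding of $H_G$ coincides with the flip state of the R-node containing it, and is unaffected by P-node permutations or by flips of other R-nodes. Granting this, an embedding satisfies the oc-constraints iff the induced assignment satisfies the unit clauses, and it satisfies the smc-constraints iff, within each color class, all wheels are oriented consistently, which is precisely what the equality/inequality clauses enforce once the relative orientations in $\Gamma(H_G)$ are taken into account. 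All remaining steps---constructing $H_G$, planarity testing, SPQR decomposition, generating the $O(n)$ clauses, solving the $2$SAT instance, flipping the R-nodes, and contracting the auxiliary edges---run in $O(n)$ time, giving the claimed linear bound for both the test and the construction of a satisfying embedding.
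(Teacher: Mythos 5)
Your proposal is correct and follows essentially the same route as the paper: build the auxiliary graph $H_G$ with wheel gadgets, exploit the fact that each wheel lies in the skeleton of an R-node, encode R-node flips as Boolean variables with unit clauses for oc-wheels and equality/inequality clauses for same-colored smc-wheels, solve the resulting 2SAT instance, and contract the auxiliary edges. The only notable difference is bookkeeping: where the paper invokes an external technique (Lemma~6 of~\cite{br-drpse-15}) to build the 2SAT formula and determine each wheel's R-node and orientation in overall linear time, you obtain the linear clause bound more self-containedly by linking each smc-wheel only to a per-color representative, which is a perfectly adequate substitute.
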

\begin{proof}
Graph $H_G$ can be constructed in linear time from $G$ and $C$ as described above.
Connected, biconnected and triconnected components of $H_G$ can be computed in linear time~\cite{ht-aeagm-73,ht-dgtc-73}. In particular, the structure of the triconnected components of each block can be described by an SPQR decomposition tree of linear size~\cite{dt-olpt-96}. Let $\Gamma(H_G)$ be an arbitrary planar embedding of $H_G$.
We now use the technique of~\cite[Lemma 6]{br-drpse-15} in order to construct in linear time the 2SAT formula $\varphi$ that describes the constraints on the planar embedding of $H_G$. In particular, this technique allows us to identify in overall linear time, for each wheel $W$, the R-node $\mu_W$ that contains $W$ and the orientation of $W$ in the embedding $\Gamma(H_G)$. Since 2SAT satisfiability can be tested in linear time~\cite{apt-ltatt-79,eis-cttmc-75}, it follows that testing whether $H_G$ admits a planar embedding satisfying the constraints in $C$ can be done in linear time. A planar embedding $\Gamma'(H_G)$ of $H_G$ that satisfies the constraints in $C$ can be constructed from a truth assignment that satisfies $\varphi$ by orienting the skeletons of each R-node $\mu$ as specified by the value of $x_\mu$. Finally, the embedding of $G$ that satisfies the constraints in $C$ is found from $\Gamma'(H_G)$ in linear time by contracting all the edges of $H_G$ that are not edges of $G$. \qed
\end{proof}

\begin{figure}[htb]
\centering
\subfigure[]{\includegraphics[page=1,width=0.32\textwidth]{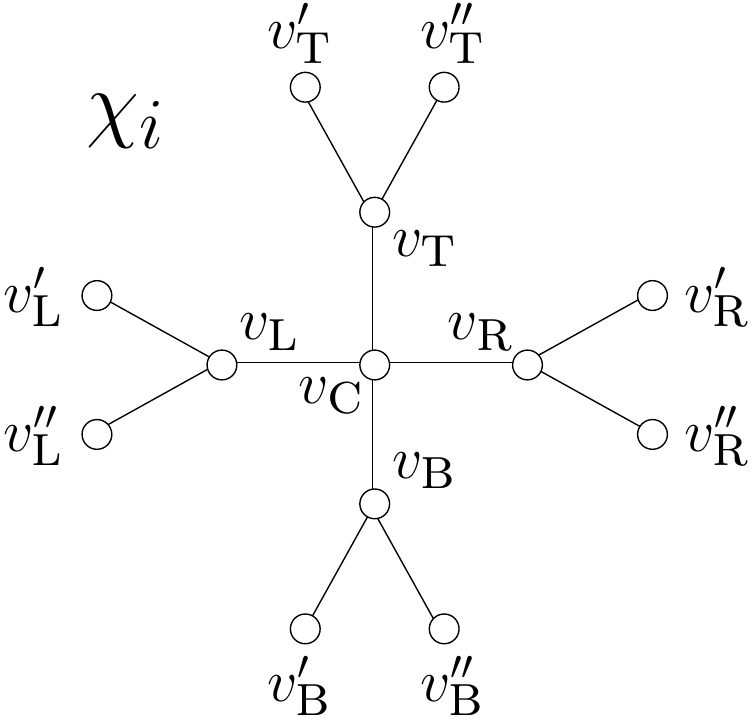}\label{fi:chi-gadget-1}}
\hfil
\subfigure[]{\includegraphics[page=2,width=0.33\textwidth]{ignaz.pdf}\label{fi:chi-gadget-2}}
\caption{The $\chi_i$ gadget used to represent a cluster $V_i = \{v',v''\}$. (a) The vertices and edges of $\chi_i$. (b) The embedding constraints for the vertices of  $\chi_i$.}\label{fi:chi-gadget}
\end{figure}

We now use Theorem~\ref{th:mutzel++} to prove that \nt planarity testing with fixed sides is linear for maximum cluster size $k=2$.
We construct an auxiliary graph $G'$ by replacing each cluster $V_i = \{v',v''\}$ of size~$2$ with a suitable gadget $\chi_i$ and a set of constraints associated with $\chi_i$. Gadget $\chi_i$ consists of a tree with $13$ vertices. Refer to Fig.~\ref{fi:chi-gadget}.
Tree $\chi_i$ has a vertex $v_\textrm{\sc{c}}$ of degree $4$ adjacent to vertices $v_\textrm{\sc{l}}, v_\textrm{\sc{r}}, v_\textrm{\sc{t}}$, and $v_\textrm{\sc{b}}$. Also, $v_\textrm{\sc{c}}$ has an embedding constraint represented by an oc-node that forces vertices $v_\textrm{\sc{t}}$, $v_\textrm{\sc{r}}$, $v_\textrm{\sc{b}}$, and $v_\textrm{\sc{l}}$ to appear in this clockwise order in the planar embedding of $\chi_i$. For $\textrm{\sc{x}} \in \{\textrm{\sc{l}},\textrm{\sc{r}},\textrm{\sc{t}},\textrm{\sc{b}}\}$, vertex $v_\textrm{\sc{x}}$ has degree $3$ and is adjacent to two leaves $v_\textrm{\sc{x}}'$, $v_\textrm{\sc{x}}''$.
For $\textrm{\sc{x}} \in \{\textrm{\sc{t}},\textrm{\sc{r}}\}$, vertex $v_\textrm{\sc{x}}$ has an embedding constraint represented by an smc-node with color $i$ that forces vertices $v_\textrm{\sc{c}}$, $v_\textrm{\sc{x}}'$, and $v_\textrm{\sc{x}}''$ to appear either in this clockwise order or in the reverse clockwise order (i.e., $v_\textrm{\sc{c}}$, $v_\textrm{\sc{x}}''$, and $v_\textrm{\sc{x}}'$) in any planar embedding of $\chi_i$.
For $\textrm{\sc{x}} \in \{\textrm{\sc{b}},\textrm{\sc{l}}\}$, vertex $v_\textrm{\sc{x}}$ has an embedding constraint represented by an smc-node with color $i$ that forces vertices $v_\textrm{\sc{c}}$, $v_\textrm{\sc{x}}''$, and $v_\textrm{\sc{x}}'$ to appear either in this clockwise order or in the reverse clockwise order (i.e., $v_\textrm{\sc{c}}$, $v_\textrm{\sc{x}}'$, and $v_\textrm{\sc{x}}''$) in any planar embedding of $\chi_i$.
Each inter-cluster edge $e$ incident to $v'$ (to $v''$, respectively) for which $\phi_i(e)=\textrm{\sc{x}}$ ($\textrm{\sc{x}} \in \{\textrm{\sc{l}},\textrm{\sc{r}},\textrm{\sc{t}},\textrm{\sc{b}}\}$) is incident to $v_\textrm{\sc{x}}'$ (to $v_\textrm{\sc{x}}''$, respectively).

Fig.~\ref{fi:ignaz-matrix-1} and Fig.~\ref{fi:ignaz-matrix-2} depict the two possible planar embeddings of $\chi_i$ satisfying the above described constraints.

\begin{figure}[htb]
\centering
\subfigure[]{\includegraphics[page=1,width=0.46\textwidth]{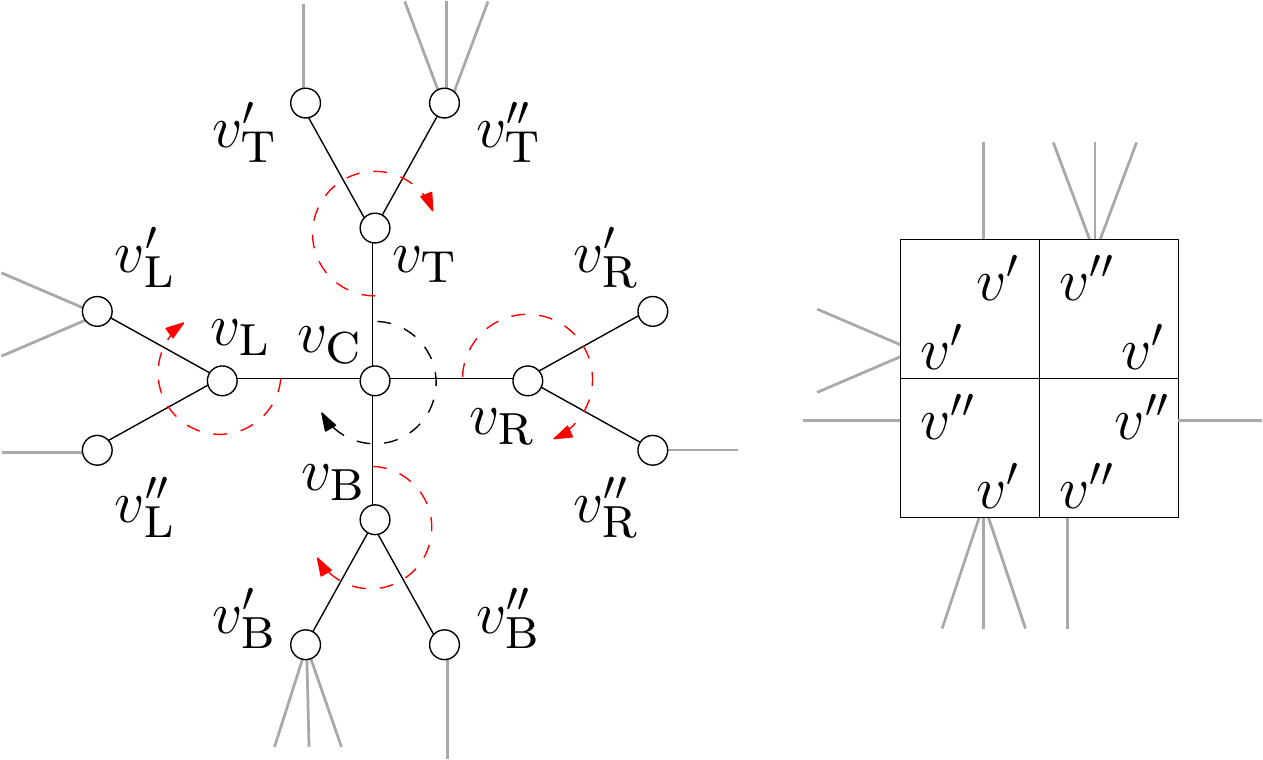}\label{fi:ignaz-matrix-1}}
\hfil
\subfigure[]{\includegraphics[page=2,width=0.46\textwidth]{ignaz-matrix.pdf}\label{fi:ignaz-matrix-2}}
\caption{The two possible embeddings of gadget $\chi_i$ and their effects on the circular order of the inter-cluster edges incident to $V_i = \{v',v''\}$. (a) The embedding of $\chi_i$ that corresponds to $\pi_i = v', v''$. (b) The embedding of $\chi_i$ that corresponds to $\pi_i = v'', v'$.}\label{fi:ignaz-matrix}
\end{figure}

Based on the discussion above we can prove the following theorem.

\begin{theorem}\label{th:polynomiality}
	Let $G=(V,E,\mathcal{C},\Phi)$ be an $n$-vertex clustered graph with side assignment $\Phi$ such that the maximum size of any cluster in $\mathcal{C}$ is two. There exists an $O(n)$-time algorithm that tests whether $G$ is \nt planar with the given side assignment and if so, it computes a \nt planar representation of $G$ consistent with $\Phi$.
\end{theorem}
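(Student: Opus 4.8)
The plan is to reduce \nt planarity testing with fixed sides for $k=2$ to the constrained planarity problem of Theorem~\ref{th:mutzel++}, using the gadget $\chi_i$ described above. Concretely, I would build an auxiliary graph $G'$ from $G$ by replacing every cluster $V_i=\{v',v''\}$ with a copy of $\chi_i$ together with its oc- and smc-constraints, redirecting each inter-cluster edge $e$ incident to $v'$ (respectively $v''$) with $\phi_i(e)=\textrm{\sc{x}}$ to the leaf $v_{\textrm{\sc{x}}}'$ (respectively $v_{\textrm{\sc{x}}}''$) of $\chi_i$, and leaving trivial clusters unchanged. Since $G$ is light, its frame is simple and, by the density bound of~\cite{dlpt-pkng-17}, $G$ has $O(n)$ vertices and inter-cluster edges; as each size-$2$ cluster is replaced by a constant-size gadget ($13$ vertices), $G'$ has $O(n)$ vertices, edges, and constraint nodes and can be constructed in $O(n)$ time.

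The heart of the proof is to show that $G'$ admits a planar embedding satisfying its constraints if and only if $G$ is \nt planar with side assignment $\Phi$. I would argue this through Theorem~\ref{th:wheel-reduction}, letting the gadget $\chi_i$ play the role of the wheel $W_i$ of the wheel reduction. The oc-node at the center $v_{\textrm{\sc{c}}}$ fixes the cyclic order top, right, bottom, left of the four sides, forcing the external boundary of $\chi_i$ to be traversed clockwise exactly as required for the external cycle of $W_i$; the two leaves $v_{\textrm{\sc{x}}}', v_{\textrm{\sc{x}}}''$ at each side encode the two copies of $v'$ and $v''$ on that side. The key observation, illustrated by Fig.~\ref{fi:ignaz-matrix-1} and Fig.~\ref{fi:ignaz-matrix-2}, is that $\chi_i$ admits exactly two embeddings compatible with its constraints, and that these correspond precisely to the two column permutations $\pi_i=v',v''$ and $\pi_i=v'',v'$ of the matrix $M_i$.

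The crucial and most delicate ingredient is the role of the smc-nodes sharing color $i$. For a matrix of size two, swapping its two rows/columns simultaneously reverses the order of the copies along \emph{all four} sides, so the four side-orderings cannot be flipped independently. This is exactly the synchronization enforced by assigning the common color $i$ to the four smc-nodes of $\chi_i$: a satisfying assignment either keeps all four smc-orientations as in the reference embedding or reverses all of them together, matching the two admissible permutations and ruling out every mixed state. I would verify that the clockwise orders imposed on $v_{\textrm{\sc{t}}},v_{\textrm{\sc{r}}}$ (namely $v_{\textrm{\sc{c}}},v_{\textrm{\sc{x}}}',v_{\textrm{\sc{x}}}''$) and on $v_{\textrm{\sc{b}}},v_{\textrm{\sc{l}}}$ (namely $v_{\textrm{\sc{c}}},v_{\textrm{\sc{x}}}'',v_{\textrm{\sc{x}}}'$) reproduce, after contracting the gadget, the cyclic order of copies along the external cycle of $W_i$ as defined in Section~\ref{se:overview}; this is precisely where the asymmetry between the top/right sides and the bottom/left sides in the constraint definition is needed. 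The main obstacle is exactly this correctness certification, namely that the two gadget embeddings biject with the two matrix permutations and that the shared-color synchronization is both necessary and sufficient.

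Finally, having established the equivalence, I would invoke Theorem~\ref{th:mutzel++}: since $G'$ carries only oc-, gc-, and smc-constraints and has size $O(n)$, we can test in $O(n)$ time whether it admits a constrained planar embedding and, if so, compute one in $O(n)$ time. From such an embedding I would recover a \nt planar representation of $G$ by contracting each gadget $\chi_i$ back to a matrix $M_i$, ordering the two rows/columns according to the orientation of the smc-nodes of color $i$ and routing the inter-cluster edges to the copies as dictated by the embedding, exactly as in the second half of the proof of Theorem~\ref{th:wheel-reduction}. Everything outside the synchronization argument is a routine linear-time construction, which yields the claimed $O(n)$ bound.
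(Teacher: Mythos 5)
Your proposal is correct and follows essentially the same route as the paper: replace each size-two cluster by the gadget $\chi_i$ with its oc- and smc-constraints, invoke Theorem~\ref{th:mutzel++} on the resulting linear-size auxiliary graph, and translate the constrained planar embedding back into a \nt planar representation via the wheel-reduction correspondence of Theorem~\ref{th:wheel-reduction}. The paper makes the last step concrete by replacing each embedded $\chi_i$ with an embedded wheel $W_i$ (possible without crossings since no inter-cluster edge touches $v_{\textrm{\sc{c}}}$ or the degree-3 vertices), obtaining exactly a planarly embedded wheel reduction of $G$ — the same correspondence you argue for, between the two constrained embeddings of $\chi_i$ and the two column permutations of $M_i$.
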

\begin{proof}
	We construct a graph $G'$ by replacing each cluster $V_i \in \mathcal{C}$ with a gadget $\chi_i$ and its associated constraints as described above. Since every gadget $\chi_i$ has constant size, $G'$ has $O(n)$ vertices and edges.
	We test $G'$ for constrained planarity by means of the procedure in Theorem~\ref{th:mutzel++}. If the test is positive, let $\Gamma(G')$ be the planar embedding of $G'$ constructed by Theorem~\ref{th:mutzel++}. We construct an embedded graph $\Gamma(G'')$ from $\Gamma(G')$ as follows. Let $\chi_i$ be the gadget of $G'$ corresponding to cluster $V_i = \{v',v''\}$ of $G$. Assume that the clockwise order of the vertices adjacent to $v_\textrm{\sc{t}}$ in $\Gamma(G')$ is $v_\textrm{\sc{c}}$, $v_\textrm{\sc{t}}'$, and $v_\textrm{\sc{t}}''$ (if the order is opposite the proof is analogous).
	We replace $\chi_i$ in $\Gamma(G')$ with an embedded wheel graph $W_i$ centered at $v_\textrm{\sc{c}}$ and having as external boundary $v_\textrm{\sc{t}}'$, $v_\textrm{\sc{t}}''$, $v_\textrm{\sc{r}}'$, $v_\textrm{\sc{r}}''$, $v_\textrm{\sc{b}}''$, $v_\textrm{\sc{b}}'$, $v_\textrm{\sc{l}}''$, and $v_\textrm{\sc{l}}'$, appearing in this clockwise order around $v_\textrm{\sc{c}}$.
  	Since there is no inter-cluster edge incident to vertices $v_\textrm{\sc{c}}$, $v_\textrm{\sc{t}}$, $v_\textrm{\sc{r}}$, $v_\textrm{\sc{b}}$, and $v_\textrm{\sc{l}}$ of $\chi_i$, replacing $\chi_i$ with $W_i$ does not introduce any edge crossings. It follows that $\Gamma(G'')$ is a planar embedded graph and that it is a wheel reduction of $G$. Therefore, we can construct a \nt planar representation $\Gamma(G)$ of $G$ as in the proof of Theorem~\ref{th:wheel-reduction}.
  	To conclude the proof observe that both the construction of $\Gamma(G'')$ and of $\Gamma(G)$ can be executed in constant time per cluster, which implies that the overall complexity is linear. \qed
\end{proof}

%
%

\subsection{NP-Completeness for $k > 2$}\label{sse:np-completeness-fixed-sides}

The linear-time result of Theorem~\ref{th:polynomiality} can not be extended to the case when $k>2$. The following theorem extends the result of~\cite{ddfp-cnrcg-jgaa-17} and proves that \nt planarity testing with fixed sides is NP-complete even when the maximum size of the clusters is three. The proof is based on a reduction from (non-planar) NAE3SAT.

\begin{theorem}\label{th:fixed-sides-completeness}
\nt planarity testing with fixed sides and cluster size at most $k$ is NP-complete for any $k>2$.
\end{theorem}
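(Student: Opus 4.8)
The plan is to prove NP-completeness in the usual two steps: membership in NP and hardness via a reduction from (non-planar) NAE3SAT.

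For \textbf{membership}, I would invoke Theorem~\ref{th:wheel-reduction}. A nondeterministic algorithm guesses a permutation assignment $\Pi$, namely one permutation $\pi_i$ of the vertices of each cluster $V_i$; this certificate has size polynomial in $n$. Given $\Pi$, the wheel reduction of $G$ consistent with $\Pi$ can be built in polynomial time, and testing whether it admits a planar embedding in which every wheel's external oriented cycle is embedded clockwise is a constrained planarity test solvable in polynomial (indeed linear) time by the technique of~\cite{gkm-ptoei-08}. By Theorem~\ref{th:wheel-reduction}, such an embedding exists for some $\Pi$ if and only if $G$ is \nt planar, so the problem is in NP.

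For \textbf{hardness} I would reduce from NAE3SAT. Given a formula $\varphi$ with variables $x_1,\dots,x_n$ and clauses $C_1,\dots,C_m$, I would build a clustered graph $G_\varphi$ with all clusters of size at most three, together with a fixed side assignment $\Phi$, so that $G_\varphi$ is \nt planar with $\Phi$ if and only if $\varphi$ has a not-all-equal assignment. The central device is a \emph{variable gadget}: a size-three cluster whose matrix, because of how its inter-cluster edges are distributed among the four fixed sides, keeps only a complementary pair $\{\pi_i,\mathrm{rev}(\pi_i)\}$ of column permutations feasible in any \nt planar representation. These two permutations are the two flips of the matrix and encode the truth values \emph{true} and \emph{false}. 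A chosen value is propagated to the occurrences of the variable through ``transmission'' edges whose fixed sides force consecutive matrices to agree, in the same spirit as the compatibility of opposite orders at P-nodes exploited in Section~\ref{sse:series-parallel}.

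The other half of the construction is a \emph{clause gadget} attached to the three literal occurrences of $C_j$ that is \nt planar precisely when the three incident variable matrices are not all in the same state; the not-all-equal predicate is realized geometrically, since three subgraphs routed around a common region can be drawn without crossings exactly when they are not all oriented identically. Correctness then reduces to the two directions: a satisfying NAE assignment yields mutually compatible flips of all matrices, hence a planar wheel reduction and, by Theorem~\ref{th:wheel-reduction}, \nt planarity; conversely any \nt planar representation reads off as a consistent truth assignment satisfying every clause. To cover every $k>2$ rather than only $k=3$, I would pad each cluster to size exactly $k$ with $k-3$ dummy vertices carrying no inter-cluster edges; since dummy rows and columns can be slid to one end of a matrix, the feasible-permutation analysis is untouched. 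I expect the \emph{main obstacle} to be the clause gadget: one must design a constant-size configuration of size-three matrices with fixed sides whose \nt planarity is logically equivalent to the NAE predicate on three binary states, and then rule out ``accidental'' feasible permutations that would let a monochromatic clause be drawn, all while preserving the two-state behavior of the variable gadgets end-to-end.
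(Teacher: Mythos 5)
Your NP-membership argument and the overall skeleton of the hardness proof (reduction from NAE3SAT, two-state variable gadgets transmitting truth values along fixed-side edges, a clause gadget made of size-three clusters whose feasibility encodes the not-all-equal predicate) match the paper's proof in spirit. However, there is a genuine gap that makes the reduction, as you describe it, fail: you never deal with the \emph{non-planarity} of the NAE3SAT instance. The frame graph of any \nt planar clustered graph is planar, so if you build $G_\varphi$ by wiring variable gadgets to clause gadgets following the variable--clause incidence graph of $\varphi$, then whenever that incidence graph is non-planar your instance $G_\varphi$ is trivially not \nt planar, independently of whether $\varphi$ is NAE-satisfiable; the reduction would give the wrong answer on satisfiable non-planar formulas. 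The obvious escape --- restricting to formulas whose incidence graph is planar --- is also fatal, because planar NAE3SAT is solvable in polynomial time (the paper cites exactly this fact, \cite{m-pnp-88}), so a reduction from it proves nothing.

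The missing ingredient, which is the real crux of the paper's construction, is a \emph{crossing gadget}: the paper first lays out the (non-planar) incidence graph in a canonical drawing with a polynomial number of crossings, and then replaces every crossing by a size-three cluster whose fixed-side assignment forces the left-to-right order of the edges entering its bottom side to equal the left-to-right order of the edges leaving its top side (and analogously right-to-left), so that a pair of value-carrying edges can pass ``through'' another pair without the frame losing planarity and without corrupting either transmitted truth value. With that gadget the constructed frame is planar by design and the hardness of general (non-planar) NAE3SAT carries over. Two smaller remarks: the paper's variable gadget is a chain of size-\emph{two} clusters (one per occurrence), which have exactly two permutations by construction, whereas your claim that a single size-three cluster can be confined to a complementary pair $\{\pi_i,\mathrm{rev}(\pi_i)\}$ of its six permutations by side assignments alone would itself need proof; and your padding step is unnecessary, since ``cluster size at most $k$'' for $k>2$ already admits clusters of sizes two and three. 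Finally, you flag the clause gadget as the main obstacle, but the paper resolves it cleanly (three size-three clusters forced to share one cyclic label order, with literal edges attached so that an all-equal assignment demands $a$ before $b$, $b$ before $c$, and $c$ before $a$ simultaneously); the obstacle you should have flagged is the planarization issue above.
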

\begin{proof}
	\nt planarity testing with fixed sides is trivially in NP. In fact, given a clustered graph $G=(V,E,\mathcal{C},\Phi)$, all possible permutations assignments $\Pi$ could be non-deterministically computed and the problem of deciding whether a clustered graph $G=(V,E,\mathcal{C},\Phi,\Pi)$ admits a \nt planar representation with side assignment $\Phi$ and permutation assignment $\Pi$ is solvable in linear time~\cite{ddfp-cnrcg-jgaa-17}.
	
	In order to prove its NP-hardness, we reduce NAE3SAT to it. An instance of NAE3SAT consists of a collection of clauses on a set of Boolean variables, where each clause consists of exactly three literals. The problem asks whether there exists a truth assignment to the variables so that each clause has at least one true literal and at least one false literal. This problem has been shown to be NP-complete by Thomas J. Schaefer~\cite{s-csp-78}. However, it is known to be polynomial when the graph of the adjacencies of the variables and clauses is planar~\cite{m-pnp-88}.
	
	Starting from a (non-planar) instance of NAE3SAT with variables $x_1$, $x_2$, \dots, $x_n$ and clauses $C_1, C_2, \dots, C_m$, we construct an instance $G=(V,E,\mathcal{C},\Phi)$ of \nt planarity testing with fixed sides as follows.
	First, we obtain a (non-planar) drawing $\Gamma$ of the graph of its variables and clauses like the one in Fig.~\ref{fi:nae3sat-instance}: The clause vertices are vertically aligned, the variable vertices are horizontally aligned, and each edge is drawn as an L-shape (i.e. an orthogonal polyline with one bend). Clearly this drawing can be computed in polynomial time and has a polynomial number of crossings.

	\begin{figure}[tb]
		\centering
		\includegraphics[width=7.2cm,page=2]{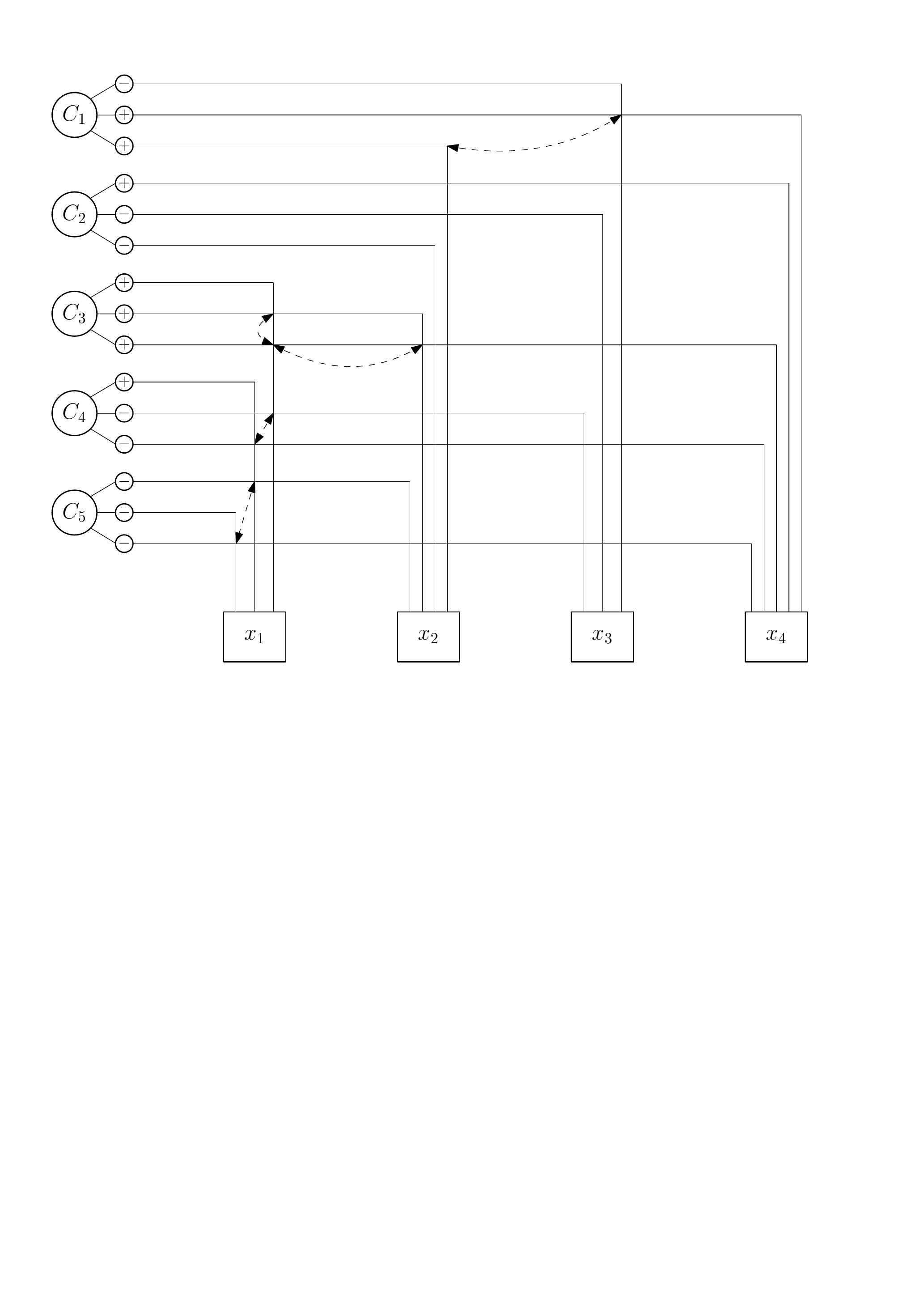}
		\caption{A non-planar drawing of an instance of NAE3SAT.
			The small plus and minus signs represent direct and negated occurrences of the variables in the clauses, respectively. 
		}\label{fi:nae3sat-instance}
	\end{figure}
	
	Then, we replace each vertex representing a variable with a variable gadget. The variable gadget for a variable of degree $h$ is composed by $h$ size-two clusters connected together as depicted in Fig.~\ref{fi:variable-gadget-true} and~\ref{fi:variable-gadget-false}. Namely, let $V_{v,1}, V_{v,2}, \dots, V_{v,h}$ be the $h$ clusters of $\mathcal{C}$ composing the variable gadget of variable $v$ and let $\{u_{i,1},u_{i,2}\}$ be the nodes of $V_{v,i}$, with $i=1, \dots, h$. We may encode a truth value with each one of the two possible representations of cluster $V_{i}$: If in the matrix $M_i$ representing $V_i$ the column corresponding to the vertex $u_{i,1}$ precedes the column corresponding to the vertex $u_{i,2}$, we say that $M_i$ is \texttt{true}. Otherwise, we say that $M_i$ is \texttt{false}. Correspondingly, we say that $\pi_i=u_{i,1},u_{i,2}$ is the \texttt{true} permutation of cluster $V_i$ and that $\pi_i=u_{i,2},u_{i,1}$ is its \texttt{false} permutation. In order to connect the clusters composing the variable gadget for $v$, we add to $E$,
	for $i=1,\dots,h-1$, the inter-cluster edges $e_{i,1}=(u_{i,1},u_{i+1,1})$ and $e_{i,2}=(u_{i,2},u_{i+1,2})$ and set $\phi_i(e_{i,1})=\textrm{\sc{r}}$, $\phi_{i+1}(e_{i,1})=\textrm{\sc{b}}$, $\phi_i(e_{i,2})=\textrm{\sc{r}}$, $\phi_{i+1}(e_{i,2})=\textrm{\sc{b}}$.
	It is immediate that in any \nt planar representation of $G$, all $M_i$, $i=1,\dots,h$, are either simultaneously \texttt{true} or simultaneously \texttt{false}. Correspondingly, we say that the variable gadget is \texttt{true} or \texttt{false}. Fig.~\ref{fi:variable-gadget-true} and~\ref{fi:variable-gadget-false} show an example of a \texttt{true} and of a \texttt{false} drawing of a variable gadget.
	
	\begin{figure}[tb]
		\centering
		\subfigure[]{\includegraphics[width=3cm]{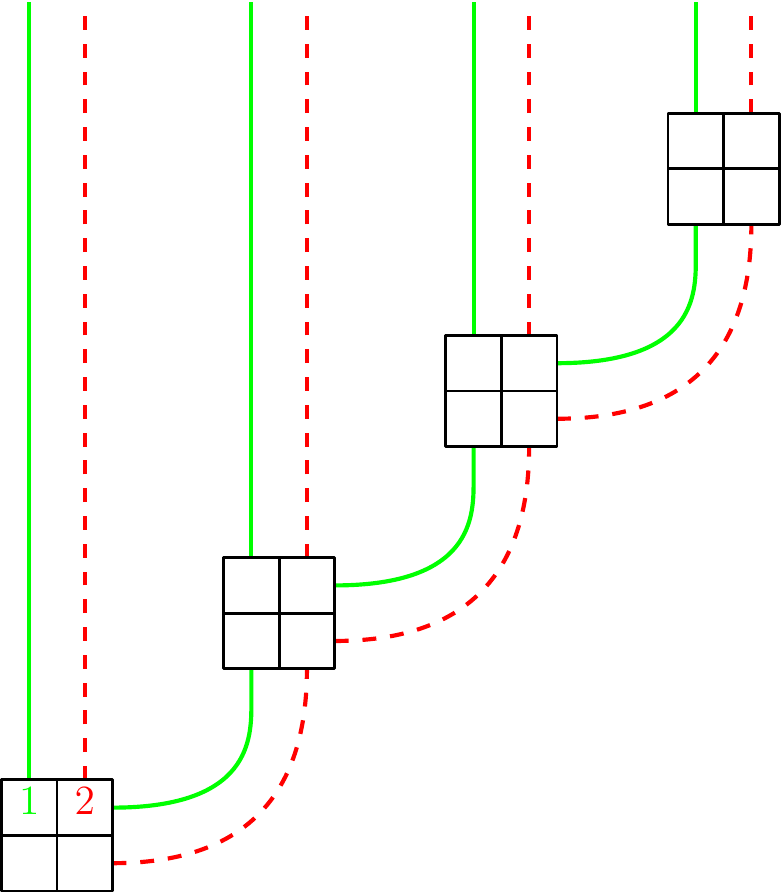}\label{fi:variable-gadget-true}}
		\hfill
		\subfigure[]{\includegraphics[width=3cm]{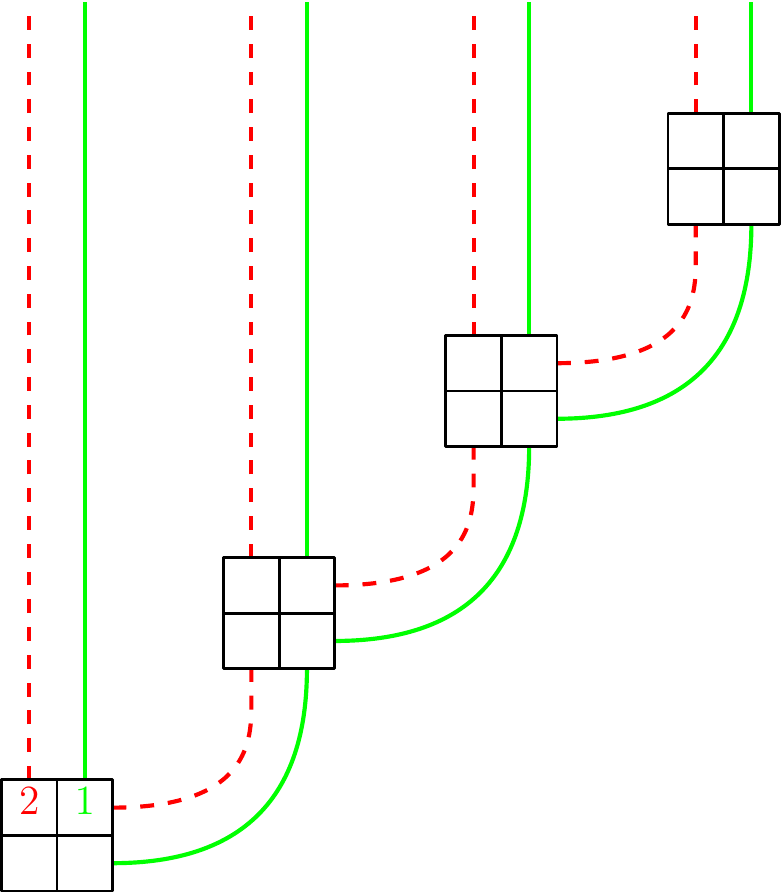}\label{fi:variable-gadget-false}}
		\hfill
		\subfigure[]{\includegraphics[width=1cm]{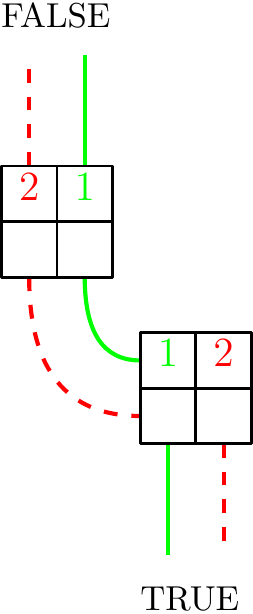}\label{fi:not-gadget-tf}}
		\hfill
		\subfigure[]{\includegraphics[width=1cm]{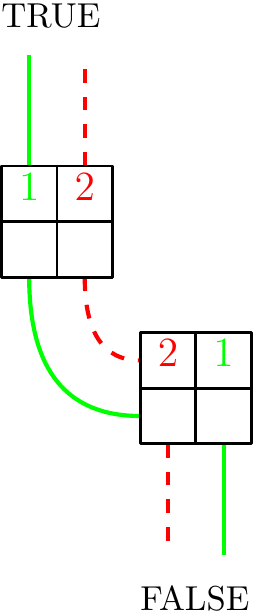}\label{fi:not-gadget-ft}}
		\caption{\subref{fi:variable-gadget-true} The \texttt{true} configuration of a variable gadget for a variable of degree four. \subref{fi:variable-gadget-false} The \texttt{false} configuration. \subref{fi:not-gadget-tf} The \texttt{not} gadget transforming an encoded \texttt{true} value into a \texttt{false} value. \subref{fi:not-gadget-ft} The \texttt{not} gadget transforming an encoded \texttt{false} value into a \texttt{true} value.}\label{fi:variable-and-not-gadgets}
	\end{figure}
	
	Each edge $e$ attaching to a variable $v$ in the drawing $\Gamma$ (refer to Fig.~\ref{fi:nae3sat-instance}) corresponds to two `parallel' inter-cluster edges $e_1$ and $e_2$ attached to one of the clusters composing the variable gadget of $v$. Let $V_j$ be such a cluster. We set $\phi_j(e_{1})=\textrm{\sc{t}}$ and $\phi_j(e_{2})=\textrm{\sc{t}}$. Observe that the order in which $e_1$ and $e_2$ exit $M_j$ depends on the truth value encoded by $M_j$ and, hence, on the truth value encoded by the variable gadget for $v$.
	
	\begin{figure}[tb]
		\centering
		\subfigure[]{\includegraphics[width=2cm]{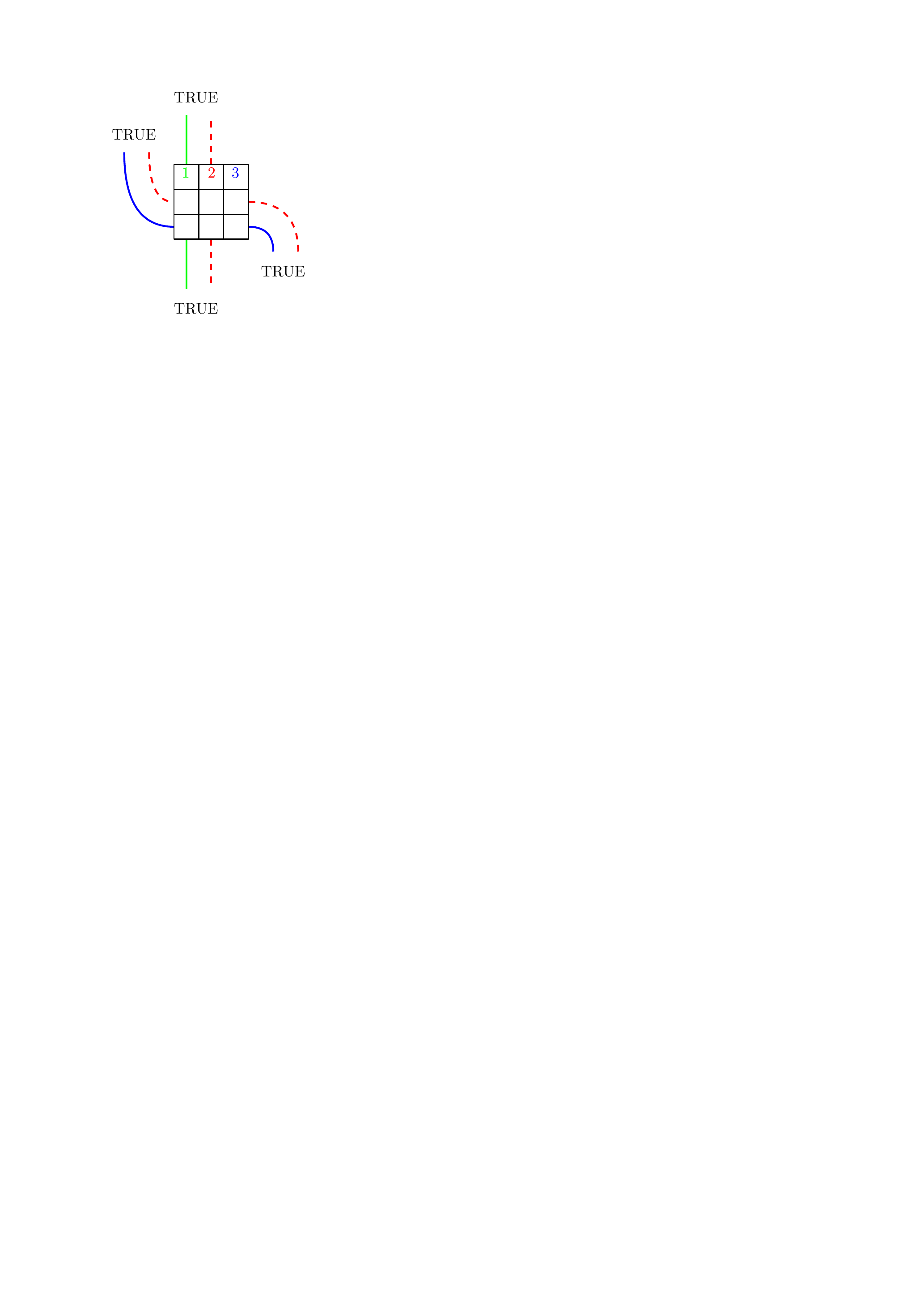}\label{fi:crossing-gadget-1}}
		\hfil
		\subfigure[]{\includegraphics[width=2cm]{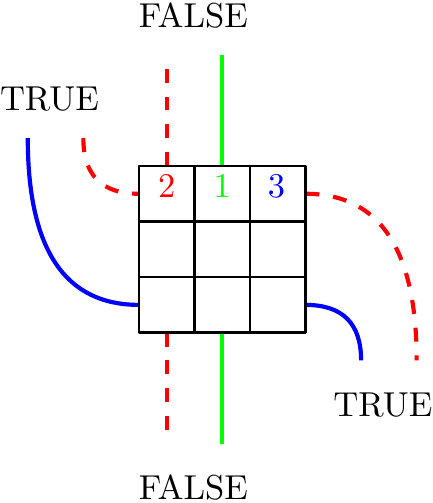}\label{fi:crossing-gadget-2}}
		\hfil
		\subfigure[]{\includegraphics[width=2cm]{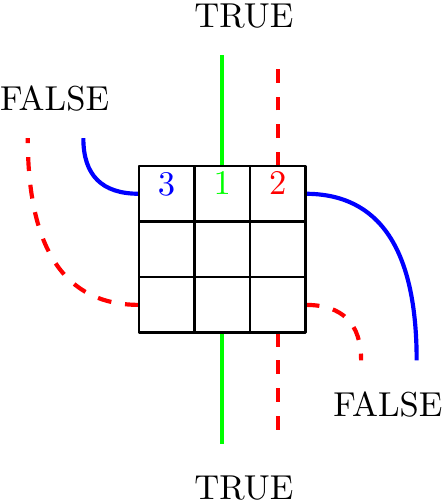}\label{fi:crossing-gadget-3}}
		\hfil \\
		\subfigure[]{\includegraphics[width=2cm]{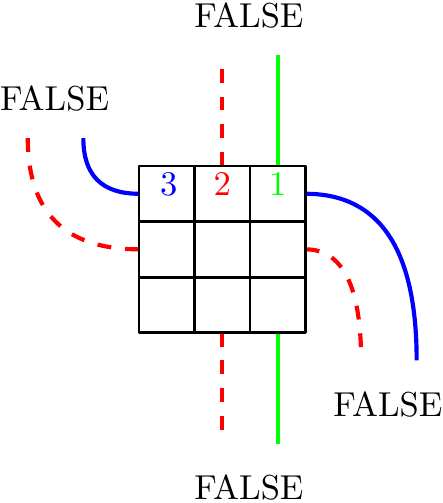}\label{fi:crossing-gadget-4}}
		\hfil
		\subfigure[]{\includegraphics[width=2cm]{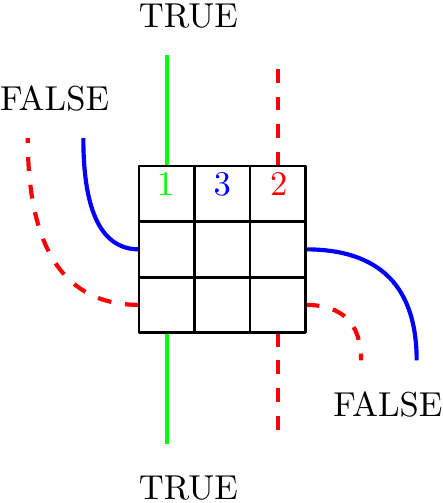}\label{fi:crossing-gadget-5}}
		\hfil
		\subfigure[]{\includegraphics[width=2cm]{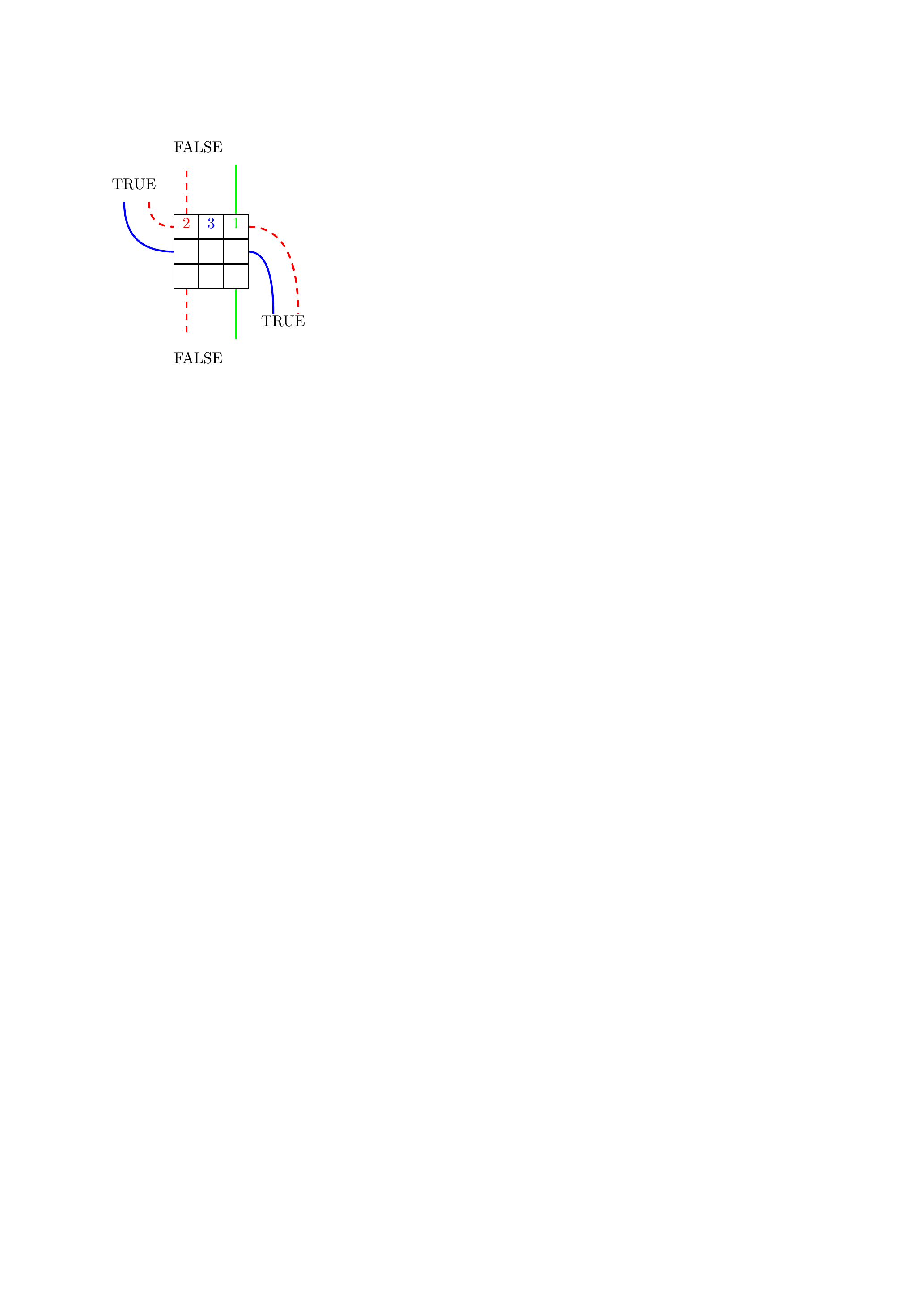}\label{fi:crossing-gadget-6}}
		\caption{The six possible configurations of a crossing gadget. }\label{fi:crossing-gadget}
	\end{figure}
	
	Fig.~\ref{fi:crossing-gadget} depicts the gadget we use to replace crossings in $\Gamma$, consisting of a cluster $V_x$ of size three. From the figure it is apparent that, in any representation of $V_x$, the left-to-right order of the edges entering $M_x$ from the bottom side is the same as the left-to-right order of the edges exiting $M_x$ from the top side. An analogous consideration holds for the top-to-bottom order of the edges entering $M_x$ from the right side and the top-to-bottom order of the edges exiting $M_x$ from the left side. This implies that the truth value encoded by the edges entering $M_x$ is the same as the truth value encoded by the edges exiting it.

	We now describe the clause gadget. We assume that three pairs of edges, encoding the truth value of the variables occurring in the clause, arrive to the clause gadget. Let $v_1, v_2,$ and $v_3$ be the three variables whose literals $l_1$, $l_2$, and $l_3$ occur in clause $C$. Before entering the clause gadget, if literal $l_i$ is a directed literal (i.e., if $l_i=v_i$) we introduce a size-two cluster that receives the truth value from $v_i$ from the bottom and transmits it to the clause gadget from the top. Otherwise, if literal $l_i$ is a negated literal of variable $v$ (i.e., if $l_i=\overline{v}_i$) then we attach the edges coming from $v_i$ to a \texttt{not} gadget, depicted in Fig.~\ref{fi:not-gadget-tf} and~\ref{fi:not-gadget-ft}, and use the edges exiting the \texttt{not} gadget instead of the edge coming directly from variable $v_i$. This has the effect that all the three pairs of edges entering the clause gadget encode a truth value that is \texttt{true} if the literal is \texttt{true} and \texttt{false} if the literal is \texttt{false}. In the following, therefore, we will consider the truth values of the literals, rather than the truth values of the variables.
	
	\begin{figure}[tb]
		\centering
		\includegraphics[width=6cm]{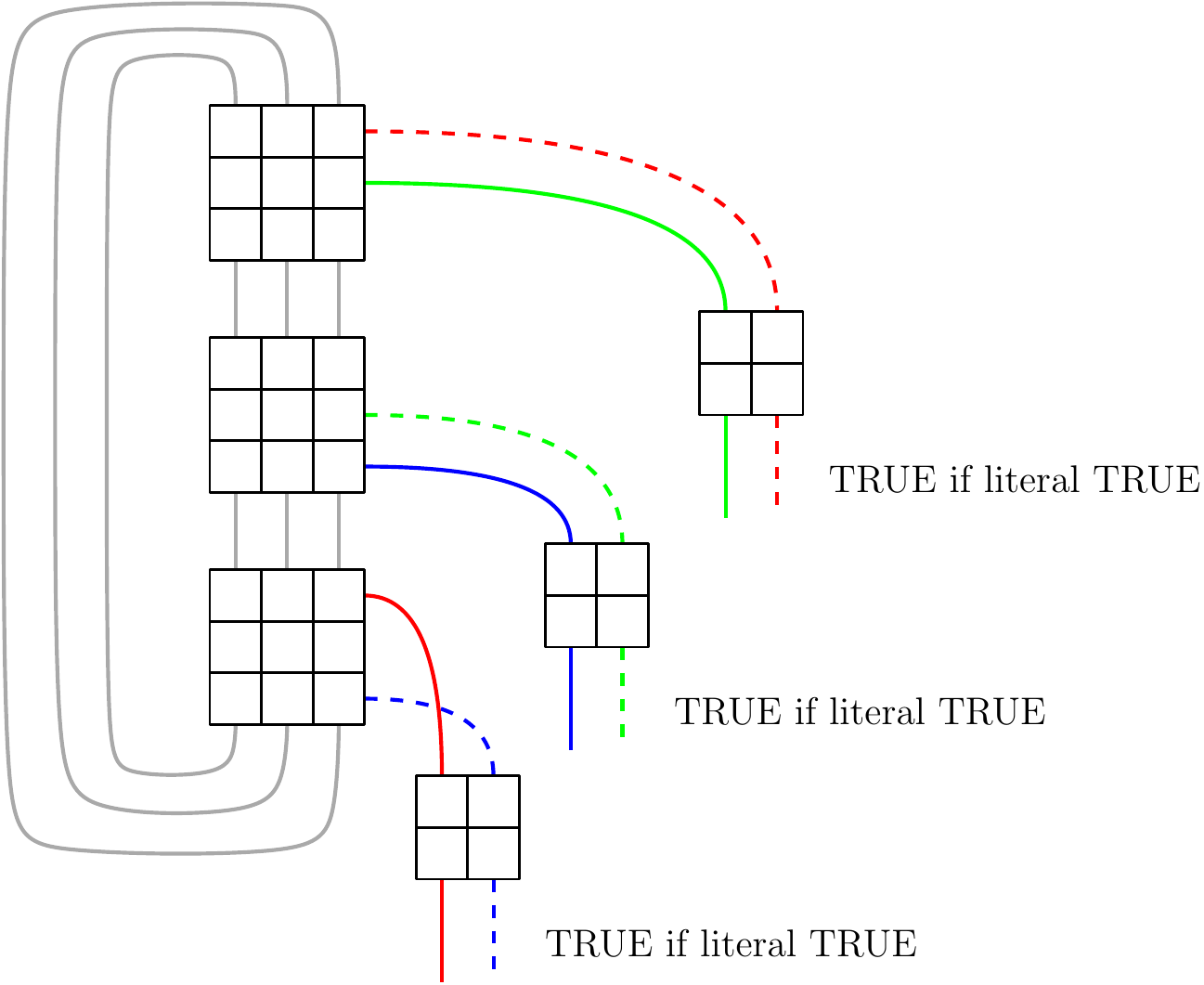}
		\caption{The clause gadget.}\label{fi:clause-gadget}
	\end{figure}
	
	The clause gadget, depicted in Fig.~\ref{fi:clause-gadget}, is composed by three clusters $V_1, V_2,$ and $V_3$ of size three, each having vertices $\{u_{i,a},u_{i,b},u_{i,c},\}$, $i=1,2,3$. The three clusters are connected together in such a way that, in any \nt planar representation of $G$, their permutations $\pi_i$, $i=1,2,3$, always present the same sequence of the labels $a,b,c$. For example, if $\pi_1 = u_{1,c},u_{1,a},u_{1,b}$, then also $\pi_2 = u_{2,c},u_{2,a},u_{2,b}$ and $\pi_3 = u_{3,c},u_{3,a},u_{3,b}$.
	
	For $i=1,\dots,3$, the edges encoding the truth value of literal $l_i$ attach to cluster $V_i$, where the prescribed side is the right side of matrix $M_i$. The two edges $e_{1,1}$ and $e_{1,2}$ encoding the truth value of the literal $l_1$ attach to $u_{1,a}$ and $u_{1,b}$, respectively. The two edges $e_{2,1}$ and $e_{2,2}$ encoding the truth value of the literal $l_2$ attach to $u_{2,b}$ and $u_{2,c}$, respectively. Finally, the two edges $e_{3,1}$ and $e_{3,2}$ encoding the truth value of the literal $l_3$ attach to $u_{3,c}$ and $u_{3,a}$, respectively. Hence, if literal $l_1$ is \texttt{true}, matrix $M_1$ must have a permutation of its columns such that column $a$ precedes column $b$, while if literal $l_1$ is \texttt{false}, matrix $M_1$ must have a permutation of its columns such that column $a$ follows column $b$. Analogously, the truth value of literal $l_2$ determines whether in matrix $M_2$ column $b$ precedes or follows column $c$, and the truth value of literal $l_3$ determines whether in matrix $M_3$ column $c$ precedes or follows column $a$.
	
	It follows that, if all three literals are \texttt{true}, then they induce unsatisfiable constraints on the ordering of the columns of the matrices, since column $a$ should precede column $b$, $b$ should precede column $c$, and $c$ should precede $a$. The same holds if all three literals are \texttt{false}. It can be easily checked that, for any other combination of truth values of the literals, there exists an ordering of the columns of matrices $M_1$, $M_2$, and $M_3$ that makes a planar drawing of the edges possible. Therefore, the constructed instance of \nt planarity with fixed sides admits a planar \nt representation if and only if the original instance of NAE3SAT admits a solution.\qed
\end{proof}

\section{The Free Sides Scenario}\label{se:free-sides}

In this section we extend the hardness result of Theorem~\ref{th:fixed-sides-completeness} to the free sides model and show that \nt planarity testing remains NP-complete when the maximum cluster dimension is larger than four. 

\begin{theorem}\label{th:free-sides-completeness}
\nt planarity testing with free sides and cluster size at most $k$ is NP-complete for any $k>4$.
\end{theorem}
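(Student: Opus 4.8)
NodeTrix planarity with free sides is in NP for exactly the same reason as in Theorem~\ref{th:fixed-sides-completeness}: one nondeterministically guesses both a side assignment $\Phi$ and a permutation assignment $\Pi$, and then verifies in linear time, via the fixed-sides-and-permutations certificate of~\cite{ddfp-cnrcg-jgaa-17}, that the resulting instance is \nt planar. So the entire task is the hardness direction, and the plan is to reduce from \nt planarity with \emph{fixed} sides and cluster size three, which is NP-complete by Theorem~\ref{th:fixed-sides-completeness}.

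The whole difficulty is that, with free sides, the attachment side of every inter-cluster edge becomes a choice rather than a constraint. Hence the first and main step is to design a \emph{side-enforcement gadget} that, in the free-sides model, forces the attachment sides of a cluster to coincide with a prescribed fixed assignment. Given a fixed-sides instance $G=(V,E,\mathcal{C},\Phi)$ with $|V_i|\le 3$, I would replace each cluster $V_i$ by a cluster $V_i'$ of size at most five, obtained by adding two \emph{anchor} vertices, and attach to $V_i'$ a constant-size \emph{frame} of auxiliary clusters joined into a cycle that must surround the matrix $M_i'$ in every planar representation. The key observation is that each vertex of a matrix contributes exactly one copy on each of the four sides, so a single anchor already provides one attachment point per side; connecting the frame to these copies places one auxiliary cluster on each side, and the surrounding cycle forces these auxiliary clusters into the fixed clockwise cyclic side order $\textrm{\sc{t}},\textrm{\sc{r}},\textrm{\sc{b}},\textrm{\sc{l}}$ of the wheel (Theorem~\ref{th:wheel-reduction}), thereby pinning the four sides of $M_i'$ up to the global symmetry of the drawing; the role of the second anchor is to break the residual reflection and thus identify the four sides individually. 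Each original inter-cluster edge with $\phi_i(e)=\textrm{\sc{x}}$ is then routed to the copy of its endpoint on the side that the frame has fixed to be $\textrm{\sc{x}}$.

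With the sides thus enforced, the equivalence should follow the lines of Theorem~\ref{th:fixed-sides-completeness}. In one direction, a free-sides planar representation of the augmented instance induces, by reading off the forced sides, a fixed-sides planar representation of $G$; conversely, any fixed-sides planar representation of $G$ can be completed to a free-sides planar representation by drawing the frames around the matrices, which is possible because frames have constant size and occupy only the four sides. Since two anchors and one frame are added per cluster, the reduction runs in polynomial time and the maximum cluster size is five, yielding the bound $k>4$.

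I expect the main obstacle to be proving the \emph{rigidity} of the side-enforcement gadget: one must show that the surrounding cycle really forces the four auxiliary clusters to occupy the four distinct sides in the intended cyclic order in \emph{every} planar embedding of the wheel reduction, not merely in the intended one. Closely related, and in my view the genuinely delicate point, is handling the \emph{residual per-cluster symmetry}: each frame fixes its cluster's sides only relative to itself and only up to a global rotation, so distinct clusters could in principle rotate independently. I would address this either by linking the frames (so that the inter-cluster edges already present in the reduction propagate a consistent orientation) or by arguing that the \nt planarity logic encoded through the column orders and the prescribed sides is invariant under a simultaneous global symmetry. Once rigidity and a consistent orientation across clusters are established, the correctness of the reduction is routine; the same side-enforcement idea could alternatively be folded directly into the variable, crossing, and clause gadgets of Theorem~\ref{th:fixed-sides-completeness} rather than applied as a black-box wrapper.
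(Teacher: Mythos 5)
Your NP membership argument and the general strategy---start from the fixed-sides construction of Theorem~\ref{th:fixed-sides-completeness} and replace each cluster by a constant-size gadget built from clusters of size $5$ that pins down the attachment sides---are the same in spirit as the paper's. But there is a genuine gap, and it is exactly the point you flag at the end and then leave unresolved: the ``residual per-cluster symmetry.'' A black-box reduction from an \emph{arbitrary} fixed-sides instance cannot work, because the frame graph of such an instance (and of your augmented instance) may have cut vertices or separation pairs; around these, whole sub-representations can be flipped or rearranged independently in a planar \nt representation with free sides. A flipped component satisfies a \emph{reflected} side assignment, not the prescribed $\Phi$, so the direction ``free-sides representation $\Rightarrow$ fixed-sides representation of $G$ consistent with $\Phi$'' fails: your frame gadgets fix each cluster's sides only relative to the gadget itself, and nothing synchronizes these local choices across separation pairs. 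Your two proposed remedies (``linking the frames'' or ``arguing invariance under a simultaneous global symmetry'') are precisely what must be proved, and proving them is the substance of the paper's argument, not a routine afterthought.

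The paper resolves this by making the \emph{entire} frame graph of $G_{\texttt{free}}$ triconnected, so that its planar embedding is unique up to one global reflection and all sides are pinned simultaneously: it redraws the NAE3SAT instance (one variable and one clause moved to the opposite side, edges with three or five bends) so that the planarized drawing is triconnected; it adds extra cycles of edges to the clause gadgets and a cycle through all variable gadgets to eliminate degree-two frame vertices introduced by the size-two clusters; and it replaces each cluster of $G_{\texttt{fix}}$ by \emph{nine} clusters of size $5$ whose frame is a wheel $W_8$, which is itself triconnected. Note also a secondary weakness in your local gadget: in the free-sides model an edge is attached to a \emph{vertex}, not to a copy, so ``connecting the frame to these copies'' is not something the input can specify; moreover, several edges incident to the same vertex may attach at the same boundary point and be permuted there, so a surrounding $4$-cycle of auxiliary clusters does not by itself force occupation of four distinct sides. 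The paper's larger wheel gadget, combined with global triconnectivity, is what makes the side-forcing argument actually go through.
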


	\begin{figure}[tb]
		\centering
		\includegraphics[width=7.2cm,page=2]{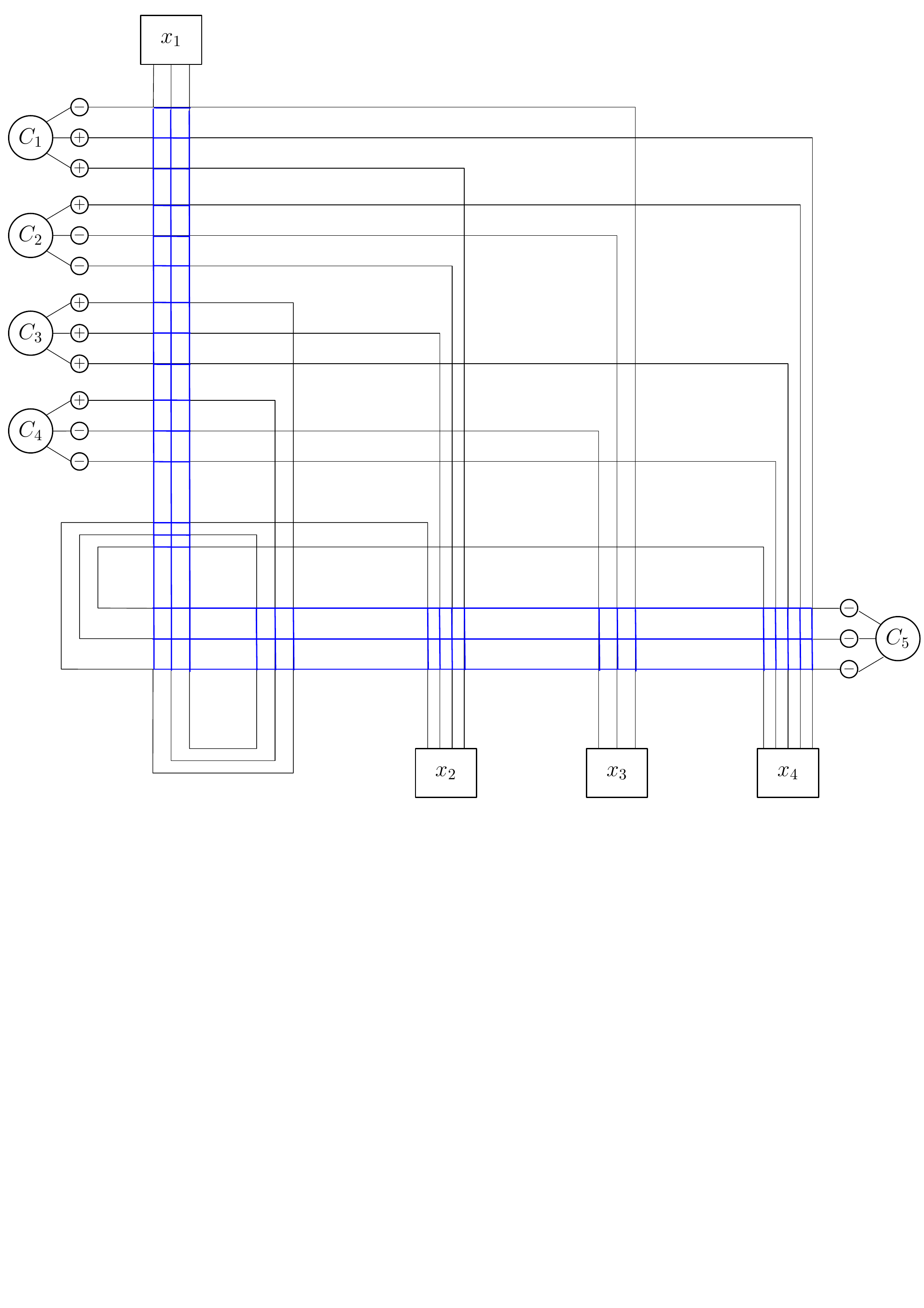}
		\caption{A non-planar drawing of an instance of NAE3SAT such that, by replacing crossings with dummy nodes, a planar and triconnected graph is obtained.}\label{fi:nae3sat-instance-3conn}
	\end{figure}

\begin{proof}
	The proof is based on a reduction from NAE3SAT. Starting from an instance $\varphi$ of NAE3SAT, we first build the instance $G_{\texttt{fix}}$ of \nt planarity with fixed sides by a construction that is similar to the one in the proof of Theorem~\ref{th:fixed-sides-completeness}.
	Namely, instead of starting from a drawing of $\varphi$ like the one in Fig.~\ref{fi:nae3sat-instance}, we start from a drawing $\Gamma$ like the one in Fig.~\ref{fi:nae3sat-instance-3conn}. The difference is that one variable $x_i$ (the leftmost one) and one clause $C_j$ (the bottommost one) are placed at the opposite side with respect to all the other variables and clauses, respectively. Moreover, their incident edges are not drawn as an L-shape, but as a polyline with three or five bends. More precisely, each edge that connects $x_i$ to a clause different from $C_j$ is drawn with three bends; each edge that connects $C_j$ to a variable different from $x_i$ is also drawn with three bends; the edge $(x_i,C_j)$, if it exists, is drawn with five bends. Clearly $\Gamma$ can be computed in polynomial time and has a polynomial number of crossings.	
	$G_{\texttt{fix}}$ is obtained by replacing variables, clauses, and crossings of $\Gamma$ with the same (variable, not, crossing, and clause) gadgets as in the proof of Theorem~\ref{th:fixed-sides-completeness}. 
	Clearly, $G_{\texttt{fix}}$, which is an instance of \nt planarity with fixed sides, admits a planar \nt representation if and only if the original instance $\varphi$ of NAE3SAT admits a solution. Furthermore, while the construction of Theorem~\ref{th:fixed-sides-completeness} may lead to separation pairs in the planarized drawing $\Gamma$ (shown with dashed arrows in the example of Fig.~\ref{fi:nae3sat-instance}) the modified technique described above guarantees that $\Gamma$ is triconnected. In fact, it consists of a subgraph of an orthogonal grid (shown with blue thick edges in Fig.~\ref{fi:nae3sat-instance-3conn}) together with edges that pairwise cross at most once.

\begin{figure}[tb]
	\centering
	\subfigure[]{\includegraphics[width=4cm]{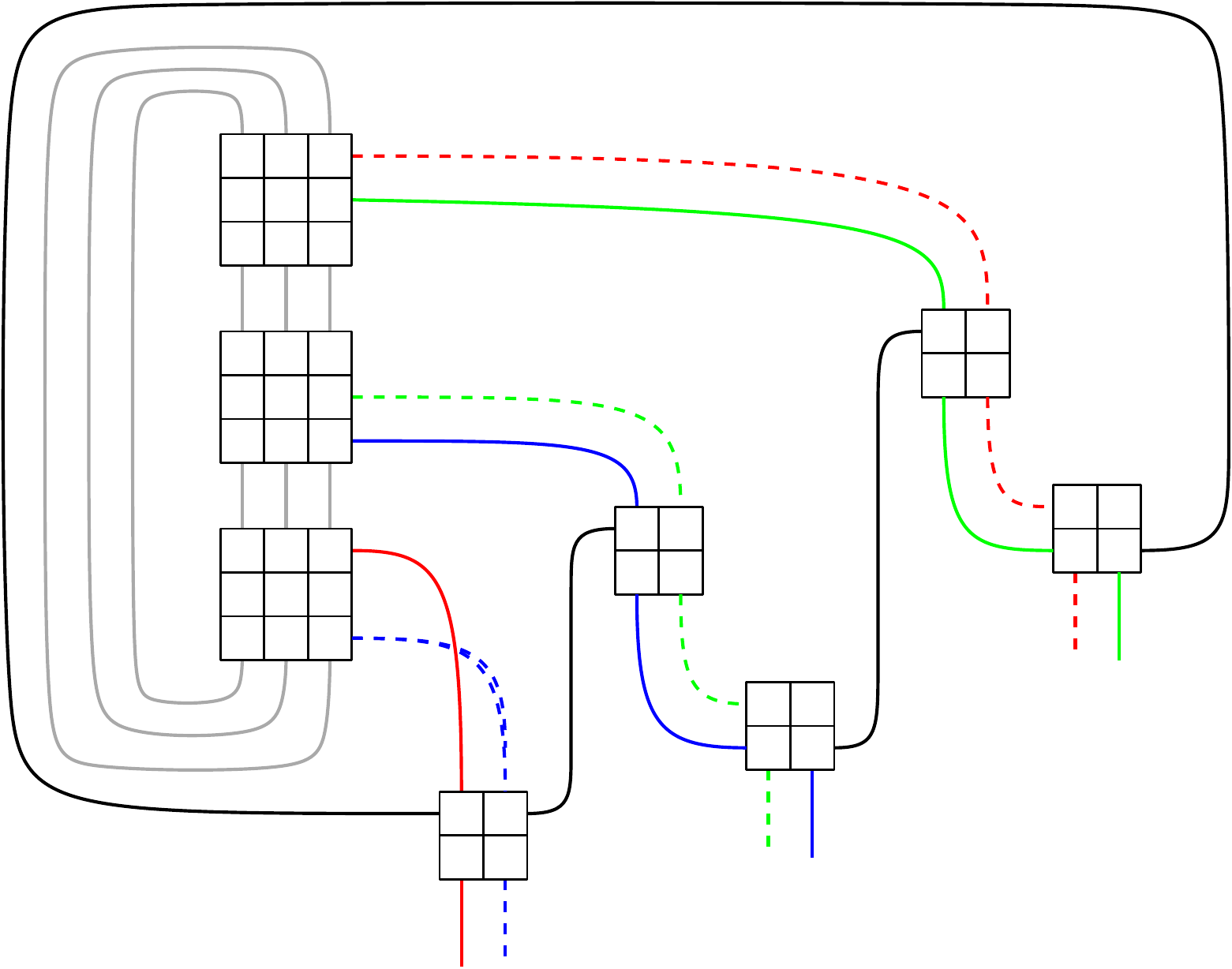}\label{fi:clause-gadget-3conn}}
	\hspace{0.5cm}
	\subfigure[]{\includegraphics[width=7cm]{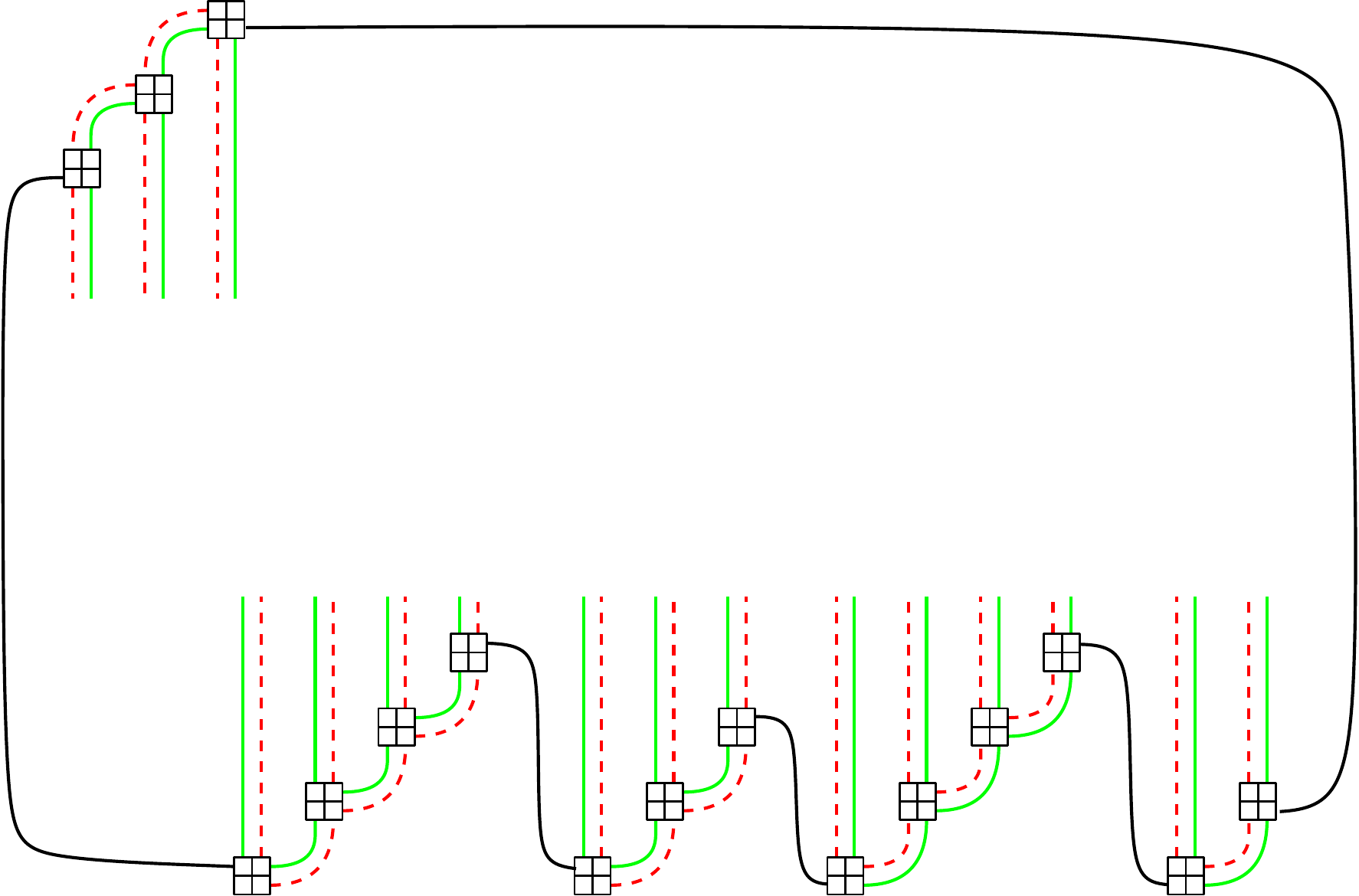}\label{fi:variable-3conn}}
	\caption{The extra-edges added to instance $G_{\texttt{fix}}$ to ensure that it has a triconnected frame-graph. \subref{fi:clause-gadget-3conn} A clause gadget with two negated literal and a directed one. \subref{fi:variable-3conn} The cycle connecting all variable gadgets.}\label{fi:3conn}
\end{figure}

    Observe that, although $\varphi$ is triconnected and the planarization of~$\Gamma$ yields a triconnected graph, the insertion of not gadgets and variable gadgets in  $G_{\texttt{fix}}$ introduces clusters that are adjacent to only two other clusters, producing a degree-two vertex in the frame graph of $G_{\texttt{fix}}$. Consider a clause gadget (refer to Fig.~\ref{fi:clause-gadget-3conn}). We add extra edges to connect the size-two clusters of the same clause gadget in a cycle as shown in Fig.~\ref{fi:clause-gadget-3conn}, in such a way that each size-two cluster has now at least three inter-cluster edges connecting it to other three distinct clusters.

    The same strategy is used for variable gadgets, where all the variable gadgets can 
    be linked together in a cycle enclosing the whole instance (see Fig.~\ref{fi:variable-3conn}). Again, after the addition of extra edges each size-two cluster of the variable gadgets has at least three inter-cluster edges connecting it to other three distinct clusters. It can be checked that, after the above described changes to $G_{\texttt{fix}}$, the frame graph of $G_{\texttt{fix}}$ has no separation pair, i.e., it is triconnected.
         
	The modified construction of $\Gamma$ used to construct $G_{\texttt{fix}}$ guarantees that the frame graph of $G_{\texttt{fix}}$ is triconnected. This is a consequence of the fact that $\varphi$ is triconnected and that inserting crossing gadgets is equivalent to planarizing $\Gamma$.

	\begin{figure}[tb]
		\centering
		\subfigure[]{\includegraphics[width=3.2cm]{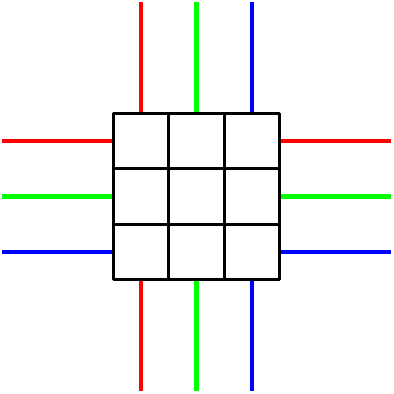}\label{fi:matrix-replacement-3x3}}
		\hspace{0.5cm}
		\subfigure[]{\includegraphics[width=3.2cm,page=1]{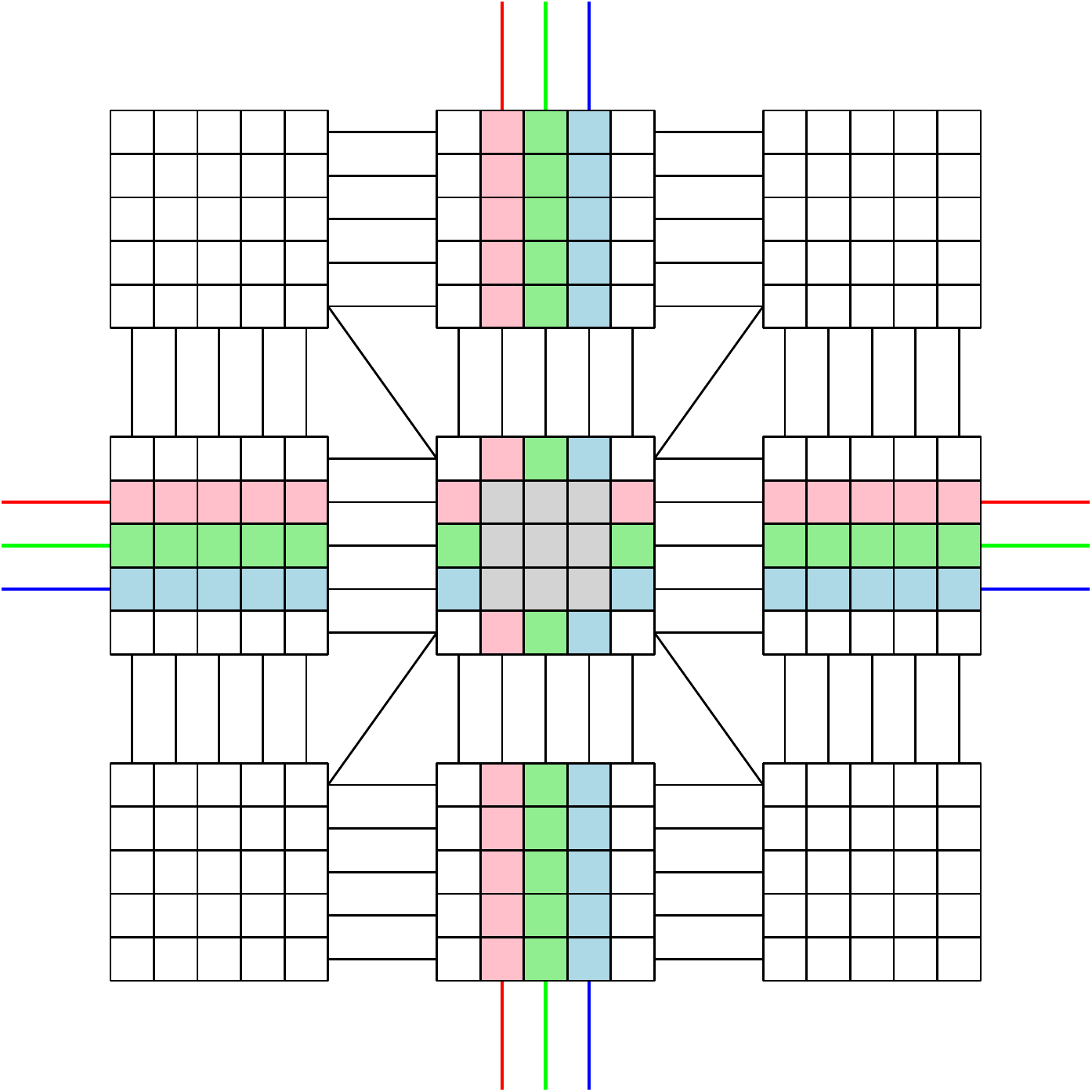}\label{fi:matrix-replacement-5x5}}
		\hspace{0.5cm}
		\subfigure[]{\includegraphics[width=3.2cm,page=2]{matrix-replacement-5x5.pdf}\label{fi:matrix-replacement-5x5-frame}}
		\caption{\subref{fi:matrix-replacement-3x3} A cluster $V_i$ of size three of the instance of \nt planarity with fixed sides. \subref{fi:matrix-replacement-5x5} The gadget used to replace $V_i$ in the instance of \nt planarity with free sides.
		\subref{fi:matrix-replacement-5x5-frame} The wheel graph of the frame graph of $G_{\texttt{free}}$ corresponding to $V_i$.}\label{fi:matrix-replacement}
	\end{figure}
	
	We now construct an instance $G_{\texttt{free}}$ of \nt planarity with free sides by replacing each cluster of maximum size three of $G_{\texttt{fix}}$ with the gadget depicted in Fig.~\ref{fi:matrix-replacement-5x5} that uses exclusively clusters of size $5$. The gadget consists of nine clusters such that the corresponding nodes in the frame graph of $G_{\texttt{free}}$ form a wheel graph with an external cycle of eight nodes all connected to a central one (see Fig.~\ref{fi:matrix-replacement-5x5-frame}). Since the wheel has only one embedding (up to a flip), the gadget admits a \nt planar representation only if the hub of the wheel is drawn inside the cycle formed by the other eight clusters. Also, the edges that in the instance $G_{\texttt{fix}}$ are constrained to attach to a specific side of a matrix due to the side assignment $\Phi$, are now all incident to the same cluster of the wheel. In other words, although the free model allows to attach these edges to any side of the matrix they are incident to, the gadget forces these edges to attach to a specific side of the gadget itself. Since $G_{\texttt{free}}$ has a triconnected frame graph, the embedding of the frame graph is fixed, and a \nt planar representation with fixed sides of $G_{\texttt{fix}}$ exists if and only if a \nt planar representation with free sides of $G_{\texttt{free}}$ exists.
	\qed
\end{proof}

\section{Conclusions and Open Problems}\label{se:open-problems}

This paper concerns \nt planarity testing of flat clustered graphs with clusters of size at most $k$, where $k$ is a given parameter. The main focus is on the fixed sides scenario, which assumes  the side of the matrix to which an edge is incident to be fixed as a part of the input. In this scenario it is proved that \nt planarity testing can be solved in linear time for $k=2$, but it becomes NP-complete for $k>2$. The reasons of the problem complexity reside in the rigid components of the SPQR-tree that decomposes the graph obtained by collapsing its clusters into vertices. Namely, when the decomposition tree does not have R-nodes we prove that NodeTrix planarity testing of flat clustered graphs is solvable in linear time for any fixed  $k>2$. The hidden multiplicative factor in the linear time complexity of our algorithm is exponential in $k$.

In addition to the above results, we study NodeTrix planarity testing in the free sides scenario, i.e. assuming that  the side of the matrix to which an edge is incident is not given as a part of the input. We show that NodeTrix planarity testing in the free sides scenario remains NP-complete for values of  $k$ such that  $k>4$.

We conclude the paper by mentioning some open problems that are naturally suggested by the research in this paper.

\begin{enumerate}
	
	\item Study the complexity of the NodeTrix planarity testing problem in the free sides scenario for values of $k$  such that $2 \leq k \leq 4$.
	
	\item For values of $k > 4$, study families of clustered graphs for which NodeTrix planarity testing is fixed parameter tractable in the free sides scenario.
	
	\item Study whether other types of hard hybrid planarity testing problems such as, for example, planar intersection-link representability~\cite{addfpr-ilrg-17}, admit a polynomial time solution when the size of the clusters is bounded by a constant.
	
    \item What is the complexity of NodeTrix planarity testing when edges cannot be attached to the top side of the matrices? This would allow, for example, to equip the matrices with labels and it is, therefore, a natural restriction of the model addressed in this paper. 
    
    \item In our definition we require that the clusters are represented by symmetric adjacency matrices. This is consistent with all previous papers that use NodeTrix representations. One could however study a different model in which rows and columns can be permuted independently.     

\end{enumerate}

\bibliography{biblio}
\bibliographystyle{splncs04}

\end{document}